\newcommand{\jpa}{J. Phys. A~}
\newcommand{\prl}{Phys. Rev. Lett.~}
\newcommand{\pra}{Phys. Rev. A~}
\newcommand{\pla}{Phys. Lett. A~}
\newcommand{\rmp}{Rev. Math. Phys.~}
\definecolor{myurlcolor}{rgb}{0,0,0.7}
\newcommand{\blue}{\textcolor{blue}}
\newcommand{\proj}[1]{| #1\rangle\!\langle #1 |}
\newcommand{\tinyspace}{\mspace{1mu}}
\newcommand{\op}[1]{\operatorname{#1}}
\newcommand{\abs}[1]{\left\lvert\tinyspace #1 \tinyspace\right\rvert}
\newcommand{\norm}[1]{\left\lVert\tinyspace #1 \tinyspace\right\rVert}
\renewcommand{\det}{\operatorname{det}}
\newcommand{\setft}[1]{\mathrm{#1}}
\newcommand{\density}[1]{\setft{D}\left(#1\right)}
\renewcommand{\vec}{\op{vec}}
\newcommand{\im}{\op{im}}
 \newcommand{\var}{\operatorname{Var}}
\def\SU{\mathsf{SU}}
\def\su{\mathfrak{su}}
\def \diag {\mathrm{diag}}
\def \re {\mathrm{Re}}
\def \im {\mathrm{Im}}
\def\complex{\mathbb{C}}
\def\real{\mathbb{R}}
\def\I{\mathbb{1}}
\def\zero{\mathbf{0}}
\newenvironment{mylist}[1]{\begin{list}{}{
    \setlength{\leftmargin}{#1}
    \setlength{\rightmargin}{0mm}
    \setlength{\labelsep}{2mm}
    \setlength{\labelwidth}{8mm}
    \setlength{\itemsep}{0mm}}}
    {\end{list}}
\def\ot{\otimes}
\newcommand{\Inner}[2]{\left\langle #1 , #2\right\rangle}
\newcommand{\Innerm}[3]{\left\langle #1 \left| #2 \right| #3 \right\rangle}
\newcommand{\defeq}{\stackrel{\smash{\textnormal{\tiny def}}}{=}}
\newcommand{\Pa}[1]{\left(#1\right)}
\newcommand{\Br}[1]{\left[#1\right]}
\newcommand{\set}[1]{\{#1\}}
\newcommand{\Set}[1]{\left\{#1\right\}}
\newcommand{\ket}[1]{|#1\rangle}
\DeclareMathOperator{\trace}{Tr}
\newcommand{\Ptr}[2]{\trace_{#1}\Pa{#2}}
\newcommand{\Tr}[1]{\Ptr{}{#1}}
\def\cD{\mathcal{D}}
\def\cI{\mathcal{I}}\def\cJ{\mathcal{J}}
\def\cU{\mathcal{U}}
\def\bG{\mathbf{G}}
\def\bsA{\boldsymbol{A}}\def\bsB{\boldsymbol{B}}\def\bsD{\boldsymbol{D}}
\def\bsG{\boldsymbol{G}}
\def\bsL{\boldsymbol{L}}\def\bsM{\boldsymbol{M}}\def\bsO{\boldsymbol{O}}
\def\bsT{\boldsymbol{T}}
\def\bsX{\boldsymbol{X}}\def\bsY{\boldsymbol{Y}}
\def\bsa{\boldsymbol{a}}\def\bsb{\boldsymbol{b}}
\def\bsr{\boldsymbol{r}}
\def\bsx{\boldsymbol{x}}\def\bsy{\boldsymbol{y}}
\def\rD{\mathrm{D}}
\def\L{\textsf{L}}
\def\bbC{\mathbb{C}}
\def\bbR{\mathbb{R}}
\newtheorem{thrm}{Theorem}[section]
\newtheorem{lem}[thrm]{Lemma}
\newtheorem{prop}[thrm]{Proposition}
\newtheorem{cor}[thrm]{Corollary}
\theoremstyle{definition}
\newtheorem{definition}[thrm]{Definition}
\newtheorem{remark}[thrm]{Remark}
\newtheorem{exam}[thrm]{Example}
\numberwithin{equation}{section}
\newcounter{questionnumber}
\begin{document}

\title{\bf\Large Uncertainty relation and the constrained quadratic programming}

\author{\blue{Lin Zhang}$^1$\footnote{E-mail: godyalin@163.com},\quad \blue{Dade Wu}$^1$,\quad \blue{Ming-Jing Zhao}$^2$,\quad \blue{Hua Nan}$^3$\\
  {\it\small $^1$School of Science, Hangzhou Dianzi University, Hangzhou 310018, PR~China}\\
  {\it\small $^2$ School of Science, Beijing Information Science and Technology University, Beijing, 100192, PR~China}\\
  {\it\small $^3$Department of Mathematics, College of Sciences, Yanbian University, Yanji 133002, PR~China}}
\date{}
\maketitle

\begin{abstract}
The uncertainty relation is a fundamental concept in quantum theory,
plays a pivotal role in various quantum information processing
tasks. In this study, we explore the additive uncertainty relation
pertaining to two or more observables, in terms of their variance,
by utilizing the generalized Gell-Mann representation in qudit
systems. We find that the tight state-independent lower bound of the
variance sum can be characterized as a quadratic programming problem
with nonlinear constraints in optimization theory. As illustrative
examples, we derive analytical solutions for these quadratic
programming problems in lower-dimensional systems, which align with
the state-independent lower bounds. Additionally, we introduce a
numerical algorithm tailored for solving these quadratic programming
instances, highlighting its efficiency and accuracy. The advantage
of our approach lies in its potential ability to simultaneously
achieve the optimal value of the quadratic programming problem with
nonlinear constraints but also precisely identify the extremal state
where this optimal value is attained. This enables us to establish a
tight state-independent lower bound for the sum of variances, and
further identify the extremal state at which this lower bound is
realized.
\end{abstract}
\newpage
\tableofcontents
\newpage

\section{Introduction}

Quantum theory serves as a fundamental framework for elucidating the
behavior of matter and energy at atomic and subatomic scales.
Central to this theory is the uncertainty relation, a principle that
asserts the impossibility of simultaneously determining certain
physical properties of a particle with absolute precision. The
earliest formulation of the uncertainty relations stemmed from
Heisenberg's pioneering work \cite{Heisenberg1927}. Specifically,
for quantum systems characterized by a definite position and
momentum or a specific spin direction, it is inherently impossible
to definitively predict measurement outcomes, as quantum theory
instead provides probability distributions. Subsequently, Robertson
\cite{Robertson1929} broadened Heisenberg's uncertainty relation to
encompass any two observables within any finite-dimensional space.

Furthermore, the uncertainty relation imposes fundamental limits on
the extractable information from a given quantum system.
Measurements pertaining to one property often perturb or alter other
properties, thus constraining the amount of knowledge that can be
gained. Consequently, the uncertainty relations have been formulated
in various forms \cite{Friedland2013,Pucha2013}, including Shannon
entropy \cite{Deutsch1983,Wu2009,Coles2017}, R\'{e}nyi entropy
\cite{Maassen1988}, conditional entropy
\cite{Renes2009,Gour2018,Kurzyk2018}, and mutual information
\cite{Grudka2013}. These formulations have profound implications for
quantum information processing tasks, such as quantum key
distribution and two-party quantum cryptography
\cite{Coles2017,Tomamichel2011}. Additionally, the uncertainty
relation has emerged as a pivotal tool in quantum random number
generation \cite{Vallone2014,Cao2016}, entanglement witnessing
\cite{Berta2014}, EPR steering \cite{Walborn2011,Schneeloch2013},
and quantum metrology \cite{Giovannetti2011}.

Recently, the additive uncertainty relation has garnered significant
interest among researchers due to its profound insights into the
limitations of simultaneously measuring two observables
\cite{Zhang2021,Szymanski2019}. This relation has been extensively
studied both theoretically and experimentally
\cite{Schwonnek2017,Zhao2019}, posing a crucial challenge to our
understanding of quantum system behavior, particularly in the realm
of quantum technology such as quantum communication, cryptography,
and computation. Investigations in quantum information have
demonstrated novel applications of additive uncertainty, enabling
secure and efficient quantum information processing. This includes
the generation and manipulation of entangled states in quantum
systems, as well as the development of quantum error correction
codes. Overall, the additive uncertainty relation occupies a pivotal
position in quantum mechanics research, with implications extending
to both fundamental inquiries and practical quantum technology
advancements. Specifically, elucidating the interplay between
quantum uncertainty and entanglement
\cite{Maccone2014,Guhne2004,Akbari2018} could reveal novel
strategies for harnessing the unique properties of quantum systems
in information processing and communication tasks.

In this paper, motivated by the intricate relationship between the
sum uncertainty of two or more observables and the concept of
quantum entanglement, we aim to explore the additive uncertainty
relation of any two or more observables in terms of variance
\cite{Li2015,Qian2018}. To achieve this, we employ the generalized
Gell-Mann representation in qudit systems. Through our analysis, we
will reveal that the state-independent lower bound \cite{Xiao2019}
of the additive uncertainty relation in qudit systems is akin to a
quadratic programming problem involving nonlinear constraints in
optimization theory. Furthermore, we derive analytical lower bounds
for the additive uncertainty relation, which provide a deeper
understanding of its fundamental properties and potential
applications.

The remainder of this paper is structured as follows. In
Section~\ref{sect:2}, we present the preliminaries necessary for our
investigation. Subsequently, in Section~\ref{sect:3}, we present our
main findings, specifically the additive uncertainty relation of any
two or more observables in terms of variance, formulated as a
quadratic programming problem with nonlinear constraints
(Theorem~\ref{th:main}) in qudit systems. In Section~\ref{sect:4}
and Section~\ref{sect:5}, we focus on lower-dimensional cases,
exploring specific calculations for qubit and qutrit systems,
respectively, and characterizing the conditions for equality in
these cases through the explicit construction of extremal states.
Following these illustrative examples, Section~\ref{sect:6}
discusses the potential applications of the additive uncertainty
relation in entanglement detection. Finally, in
Section~\ref{sect:7}, we summarize our findings and conclude the
paper.

\section{Preliminaries}\label{sect:2}

Throughout this notes, all inner products involved in vectors are
Euclidean one, i.e., $\Inner{\bsx}{\bsy} := \sum_k\bar x_k y_k$,
here the bar means complex conjugate; and all inner products
involved in matrices are Hilbert-Schmidt one, i.e.,
$\Inner{\bsX}{\bsY} := \Tr{\bsX^\dagger\bsY}$, where $\dagger$ means
the complex conjugate and transpose.

In \cite{Byrd2003,Kimura2003}, the authors develop the Bloch vectors
formalism for an arbitrary finite-dimensional quantum system. In
this formalism, they use the following convention for the generators
$\bsG_k(k=1,\ldots,n^2-1)$, where $n\geqslant2$, of the Lie algebra
$\su(n)$ of $\SU(n)$:
\begin{eqnarray}
\Inner{\bsG_i}{\bsG_j}=2\delta_{ij}.
\end{eqnarray}
The commutation and anticommutation\footnote{Here
$[\bsX,\bsY]:=\bsX\bsY-\bsY\bsX$ stands for the commutator and
$\Set{\bsX,\bsY}:=\bsX\bsY+\bsY\bsX$ the anticommutator} for these
generators can also be specified as:
\begin{eqnarray}
[\bsG_i,\bsG_j] &=& 2\sqrt{-1}f_{ijk}\bsG_k = 2\sqrt{-1}f^{(k)}_{ij}\bsG_k\\
\Set{\bsG_i,\bsG_j} &=&
\frac4n\delta_{ij}\I+2d_{ijk}\bsG_k=\frac4n\delta_{ij}\I+2d^{(k)}_{ij}\bsG_k,
\end{eqnarray}
where the expansion coefficients
$d_{ijk}=\frac14\Tr{\set{\bsG_i,\bsG_j}\bsG_k}=:d^{(k)}_{ij}$ are
\emph{totally symmetric} and the expansion coefficients
$f_{ijk}=-\frac{\sqrt{-1}}4\Tr{[\bsG_i,\bsG_j]\bsG_k}=:f^{(k)}_{ij}$,
the structure constants of $\su(n)$, are \emph{totally
antisymmetric} in their indices; and by convention the summation is
performed for repeated indices. Both two identities can be
summarized as
\begin{eqnarray}\label{eq:GiGj}
\bsG_i\bsG_j =
\frac2n\delta_{ij}\I+\Pa{d^{(k)}_{ij}+\sqrt{-1}f^{(k)}_{ij}}\bsG_k,
\end{eqnarray}
which implies $\Tr{\bsG_i\bsG_j\bsG_k}=2\Pa{d_{ijk}+\sqrt{-1}f_{ijk}}=2\Pa{d^{(k)}_{ij}+\sqrt{-1}f^{(k)}_{ij}}$ or
\begin{eqnarray}
d_{ijk}=\frac12\re\Tr{\bsG_i\bsG_j\bsG_k}\quad\text{and}\quad f_{ijk}=\frac12\im\Tr{\bsG_i\bsG_j\bsG_k}.
\end{eqnarray}
These generators of the Lie algebra $\su(n)$ can be used to describe
any $n\times n$ qudit state (or density matrix) $\rho$ in terms of a
corresponding $(n^2-1)$-dimensional \emph{real} vector
$\bsr\in\bbR^{(n^2-1)}$:
\begin{eqnarray}
\rho(\bsr) = \frac1n\Pa{\I+\sqrt{\frac{n(n-1)}2}\bsr\cdot\bG},
\end{eqnarray}
where $\bsr=(r_1,\ldots,r_{n^2-1})\in\bbR^{(n^2-1)}$ and
$\bsr\cdot\bG=\sum^{n^2-1}_{k=1}r_k\bsG_k$ for
$\bG=(\bsG_1,\ldots,\bsG_{n^2-1})$. This representation is called
the \emph{generalized Gell-Mann representation} (or \emph{coherence
vector representation} with $\bsr$ the coherence vector
\cite{Byrd2003}), which are the higher-dimensional extensions of the
Pauli matrices $\boldsymbol{\sigma}:=(\sigma_1,\sigma_2,\sigma_3)$
on $\bbC^2$ and the Gell-Mann matrices on $\bbC^3$
\cite{Loubenets2021}. We use the notation
$\norm{\bsr}:=\sqrt{\sum^{n^2-1}_{k=1}r^2_k}$ stands for the usual
Euclidean length of $\bsr\in\bbR^{(n^2-1)}$. Denote by
\begin{eqnarray}
\Omega_n:=\Set{\bsr\in\bbR^{(n^2-1)}:\rho(\bsr)\geqslant\zero}.
\end{eqnarray}
We also denote by
\begin{eqnarray}
\Omega^{\text{ext}}_n:=\set{\bsr\in\Omega_n:\rho(\bsr)^2=\rho(\bsr)},
\end{eqnarray}
which corresponds to the set of all pure qudit states. As a submanifold, $\dim(\Omega^{\text{ext}}_n)=2(n-1)$. In order to
characterize analytically the structures of $\Omega_n$ and
$\Omega^{\text{ext}}_n$, respectively, we need the following
notion---symmetric star-product.
\begin{definition}[Symmetric star-product in $\bbR^{(n^2-1)}$, \cite{Byrd2003}]\label{def:starproduct}
The so-called \emph{symmetric star-product} $\star$ in
$\bbR^{(n^2-1)}$ on vectors $\bsx,\bsy\in\bbR^{(n^2-1)}$ through
\begin{eqnarray}
(\bsx\star\bsy)_k \defeq\frac1{n-2}
\sqrt{\frac{n(n-1)}2}d^{(k)}_{ij}x_iy_j,
\end{eqnarray}
where $k=1,\ldots,n^2-1$.
\end{definition}
Essentially, the symmetric star-product in $\bbR^{(n^2-1)}$ is a
diagonal mapping on $\bbR^{(n^2-1)}\times\bbR^{(n^2-1)}$. For a
convenient usage, we rewrite star-product as following form:
\begin{prop}\label{prop:starprod}
The star-product in $\bbR^{(n^2-1)}$ can be represented by
\begin{eqnarray}
\bsx\star\bsy \defeq
(\Innerm{\bsx}{\bsD_1}{\bsy},\ldots,\Innerm{\bsx}{\bsD_{n^2-1}}{\bsy})\quad(\forall\bsx,\bsy\in\bbR^{(n^2-1)}),
\end{eqnarray}
where $\bsD_k$ is an $(n^2-1)\times (n^2-1)$ real symmetric matrix
whose $(i,j)$-entries being identified with
\begin{eqnarray}\label{eq:Dk}
\Innerm{i}{\bsD_k}{j}:=\frac1{n-2}
\sqrt{\frac{n(n-1)}2}d^{(k)}_{ij},
\end{eqnarray}
where $k=1,\ldots,n^2-1$; and $i,j=1,\ldots,n^2-1$. Moreover we have
the following statements:
\begin{enumerate}[(i)]
\item $\Tr{\bsD_k}=0$ for all $k=1,\ldots,n^2-1$.
\item The set of all $\bsD_k$'s, $\Set{\bsD_k: k=1,\ldots,n^2-1}$, is linearly
independent for odd number $n$.
\item It holds that
\begin{eqnarray*}
\norm{\bsx\star\bsy}^2=\frac{n(n-1)}{8(n-2)^2}\Br{\Tr{(\bsx\cdot\bG\bsy\cdot\bG)^2)}+\frac4n\Inner{\bsx}{\bsx}\Inner{\bsy}{\bsy}+\frac{4(n-2)^2}{n(n-1)}\Inner{\bsx\star\bsx}{\bsy\star\bsy}-\frac8n\Inner{\bsx}{\bsy}^2}.
\end{eqnarray*}
In particular,
\begin{eqnarray*}
\norm{\bsx\star\bsx}^2=\frac{n(n-1)}{4(n-2)^2}\Br{\Tr{(\bsx\cdot\bG)^4}
- \frac4n\Inner{\bsx}{\bsx}^2}.
\end{eqnarray*}
\end{enumerate}
\end{prop}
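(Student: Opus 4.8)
The plan is to treat the three parts separately; throughout set $\bsX:=\bsx\cdot\bG$ and $\bsY:=\bsy\cdot\bG$, the traceless Hermitian matrices attached to $\bsx,\bsy\in\bbR^{(n^2-1)}$.

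\emph{Part (i).} By \eqref{eq:Dk} we have $\Tr{\bsD_k}=\frac1{n-2}\sqrt{\frac{n(n-1)}2}\sum_{i=1}^{n^2-1}d_{iik}$, so it suffices to show $\sum_i d_{iik}=0$. Since $d_{iik}=\frac14\Tr{\set{\bsG_i,\bsG_i}\bsG_k}=\frac12\Tr{\bsG_i^2\bsG_k}$, we get $\sum_i d_{iik}=\frac12\Tr{\big(\sum_i\bsG_i^2\big)\bsG_k}$. The operator $\sum_i\bsG_i^2$ commutes with every $\bsG_j$ — in $[\sum_i\bsG_i^2,\bsG_j]=2\sqrt{-1}\,f_{ijk}\set{\bsG_i,\bsG_k}$ the coefficient $f_{ijk}$ is antisymmetric and the anticommutator symmetric under $i\leftrightarrow k$ — so by Schur's lemma it is a scalar, necessarily $\frac{2(n^2-1)}n\I$ upon taking the trace; hence $\sum_i d_{iik}=\frac{n^2-1}n\Tr{\bsG_k}=0$.

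\emph{Part (ii).} I would show the Gram matrix $\big[\Tr{\bsD_k\bsD_l}\big]$ is nonsingular. Using \eqref{eq:Dk} and the total symmetry of $d_{ijk}$,
\begin{eqnarray*}
\Tr{\bsD_k\bsD_l}=\sum_{i,j}\Innerm{i}{\bsD_k}{j}\Innerm{j}{\bsD_l}{i}=\frac{n(n-1)}{2(n-2)^2}\sum_{i,j}d_{ijk}d_{ijl}.
\end{eqnarray*}
The standard $\su(n)$ contraction identity $\sum_{i,j}d_{ijk}d_{ijl}=\frac{n^2-4}n\delta_{kl}$ (which follows from \eqref{eq:GiGj} together with the completeness relation $\sum_k\bsG_k\otimes\bsG_k=2P-\frac2n\I\otimes\I$ for the swap $P$) then gives $\Tr{\bsD_k\bsD_l}=\frac{(n-1)(n+2)}{2(n-2)}\delta_{kl}$, a nonzero multiple of the identity for every $n\geqslant3$; in particular for odd $n$ the $\bsD_k$ are linearly independent. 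Conceptually: $d$ is a nonzero $\su(n)$-equivariant symmetric map from the adjoint representation to itself, and the adjoint is irreducible, so $\bsc\mapsto\sum_k c_k\bsD_k$ is injective.

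\emph{Part (iii).} The key point is that $\bsx\star\bsy$ is, up to a fixed scalar, the coefficient vector of the traceless part of the anticommutator $\set{\bsX,\bsY}$. Indeed $d_{ijk}x_iy_j=\frac14\Tr{\set{\bsX,\bsY}\bsG_k}$; writing $\bsW:=\set{\bsX,\bsY}-\frac{4\inner{\bsx}{\bsy}}n\I$ for the traceless part (note $\Tr{\set{\bsX,\bsY}}=2\Tr{\bsX\bsY}=4\inner{\bsx}{\bsy}$) and $z_k:=\frac12\Tr{\bsW\bsG_k}=2d_{ijk}x_iy_j$ for its coefficient vector, the definition of $\star$ gives $(\bsx\star\bsy)_k=\frac1{2(n-2)}\sqrt{\frac{n(n-1)}2}\,z_k$, hence $\norm{\bsx\star\bsy}^2=\frac{n(n-1)}{16(n-2)^2}\Tr{\bsW^2}$. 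Then I would expand $\Tr{\bsW^2}$ in three steps: splitting off the trace part, $\Tr{\bsW^2}=\Tr{\set{\bsX,\bsY}^2}-\frac{16}n\inner{\bsx}{\bsy}^2$; cyclicity of the trace, $\Tr{\set{\bsX,\bsY}^2}=2\Tr{(\bsX\bsY)^2}+2\Tr{\bsX^2\bsY^2}$; and \eqref{eq:GiGj} together with $f_{ijk}x_ix_j=0$, giving $\bsX^2=\frac2n\inner{\bsx}{\bsx}\I+\bsu\cdot\bG$ with $u_k=d_{ijk}x_ix_j=(n-2)\sqrt{\frac2{n(n-1)}}(\bsx\star\bsx)_k$ (and the analogue for $\bsY^2$), whence $\Tr{\bsX^2\bsY^2}=\frac4n\inner{\bsx}{\bsx}\inner{\bsy}{\bsy}+\frac{4(n-2)^2}{n(n-1)}\inner{\bsx\star\bsx}{\bsy\star\bsy}$. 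Substituting these three identities into $\norm{\bsx\star\bsy}^2=\frac{n(n-1)}{16(n-2)^2}\Tr{\bsW^2}$ and noting $\Tr{(\bsX\bsY)^2}=\Tr{(\bsx\cdot\bG\,\bsy\cdot\bG)^2}$ produces the displayed formula; the special case $\norm{\bsx\star\bsx}^2$ then follows by setting $\bsy=\bsx$ and solving the resulting linear relation for $\norm{\bsx\star\bsx}^2$.

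The only genuinely nontrivial ingredient is the nondegeneracy of the cubic $\su(n)$ tensor $d_{ijk}$ used in part (ii) — i.e.\ the identity $\sum_{ij}d_{ijk}d_{ijl}\propto\delta_{kl}$, equivalently that $\sym^2$ of the adjoint contains the adjoint with multiplicity one. Part (i) is immediate from the quadratic Casimir, and part (iii), though the longest, is a mechanical computation once one routes it through the anticommutator $\set{\bsX,\bsY}$ and separates off its trace.
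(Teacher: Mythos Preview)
Your proof is correct. Parts (i) and (iii) run essentially parallel to the paper's: for (i) the paper obtains $\sum_i\bsG_i^2\propto\I$ from the completeness relation $\frac1n\vec(\I)\vec(\I)^\dagger+\frac12\sum_k\vec(\bsG_k)\vec(\bsG_k)^\dagger=\I\otimes\I$ rather than from Schur's lemma, and for (iii) the paper routes the same computation through that vectorization identity applied to $\vec(\set{\bsX,\bsY})$, but the underlying mechanism --- extracting the traceless part of $\set{\bsX,\bsY}$ and squaring --- is identical to yours.

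Part (ii), however, is genuinely different. The paper argues directly: if $\sum_k\alpha_k\bsD_k=\zero$ then, setting $\bsM=\boldsymbol{\alpha}\cdot\bG$, one has $\Tr{\set{\bsG_i,\bsG_j}\bsM}=0$ for all $i,j$, hence $\set{\bsG_j,\bsM}\propto\I$ for every $j$; a short manipulation then forces $\bsM^2=\frac2n\norm{\boldsymbol{\alpha}}^2\I$, so the eigenvalues of $\bsM$ are $\pm\sqrt{2/n}\,\norm{\boldsymbol{\alpha}}$, and tracelessness gives $(2m-n)\norm{\boldsymbol{\alpha}}=0$ with $m$ the positive-eigenvalue multiplicity --- whence $\boldsymbol{\alpha}=\zero$ precisely when $n$ is odd. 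Your Gram-matrix route via $\sum_{ij}d_{ijk}d_{ijl}=\tfrac{n^2-4}n\delta_{kl}$ is cleaner and in fact yields $\Tr{\bsD_k\bsD_l}=\tfrac{(n-1)(n+2)}{2(n-2)}\delta_{kl}$, so linear independence holds for \emph{every} $n\geqslant3$, not only odd $n$; the price is that you import a standard but nontrivial $\su(n)$ contraction identity, whereas the paper's spectral argument is entirely self-contained (at the cost of the weaker conclusion). Both are valid; yours buys a stronger statement, the paper's buys fewer prerequisites.
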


\begin{proof}
The reformulation of star-product is trivially, and it is omitted
here. Next we show the 1st item. In fact, we have already known that
$\set{\I,\bsG_1,\ldots,\bsG_{n^2-1}}$ is orthogonal matrix basis for
the set of Hermitian matrices on $\bbC^n$. Then
$\set{\vec(\I),\vec(\bsG_1),\ldots,\vec(\bsG_{n^2-1})}$ is just
orthogonal basis for $\bbC^n\ot\bbC^n$. By normalizing them, we get
the orthonormal basis,
$$
\Set{\frac1{\sqrt{n}}\vec(\I),\frac1{\sqrt{2}}\vec(\bsG_1),\ldots,\frac1{\sqrt{2}}\vec(\bsG_{n^2-1})},
$$
leading to the following fact
\begin{eqnarray}\label{eq:generatorid}
\frac1n\vec(\I)\vec(\I)^\dagger+\frac12\sum^{n^2-1}_{k=1}\vec(\bsG_k)\vec(\bsG_k)^\dagger=\I\ot\I.
\end{eqnarray}
By partial-tracing the 2nd subsystem, we get that
$\frac1n\I+\frac12\sum^{n^2-1}_{k=1}\bsG^2_k=n\I$, that is,
\begin{eqnarray}
\sum^{n^2-1}_{k=1}\bsG^2_k=2\Pa{n-\frac1n}\I\propto\I.
\end{eqnarray}
This leads $\Tr{\bsD_k}=0$ for all $k=1,\ldots,n^2-1$. Indeed,
\begin{eqnarray*}
\Tr{\bsD_k}&=&\sum^{n^2-1}_{i=1}d^{(k)}_{ii} = \frac14\sum^{n^2-1}_{i=1}\Tr{\set{\bsG_i,\bsG_i}\bsG_k} = \frac12\Tr{\Pa{\sum^{n^2-1}_{i=1}\bsG^2_i}\bsG_k}\\
&=&\Pa{n-\frac1n}\Tr{\bsG_k}=0.
\end{eqnarray*}
For the 2nd item, in order to get the independence, we show that, if
there are $(n^2-1)$ real numbers $\alpha_k(k=1,\ldots,n^2-1)$ such
that $\sum^{n^2-1}_{k=1}\alpha_k\bsD_k=\zero$, then
$\alpha_k=0(k=1,\ldots,n^2-1)$. Indeed,
$\sum^{n^2-1}_{k=1}\alpha_k\bsD_k=\zero$ means that
\begin{eqnarray*}
0&=&\Innerm{i}{\sum^{n^2-1}_{k=1}\alpha_k\bsD_k}{j} =
\sum^{n^2-1}_{k=1}\alpha_kd^{(k)}_{ij}=\sum^{n^2-1}_{k=1}\alpha_kd^{(k)}_{ij}\\
&=&\frac12\sum^{n^2-1}_{k=1}\alpha_k\re\Tr{\bsG_i\bsG_j\bsG_k}=\frac12\re\Tr{\bsG_i\bsG_j\Pa{\sum^{n^2-1}_{k=1}\alpha_k\bsG_k}}.
\end{eqnarray*}
Denote
$\bsM:=\sum^{n^2-1}_{k=1}\alpha_k\bsG_k=\boldsymbol{\alpha}\cdot\bG$.
Apparently $\bsM$ is Hermitian matrix. From \eqref{eq:GiGj}, we see
that
\begin{eqnarray*}
\Tr{\set{\bsG_i,\bsG_j}\bsM}
=\Tr{\bsG_i\set{\bsG_j,\bsM}}=\Inner{\bsG_i}{\set{\bsG_j,\bsM}}=0\quad(\forall(i,j)).
\end{eqnarray*}
This means that $\set{\bsG_j,\bsM}\propto\I$ for all $j$. Without
loss of generality, let $\set{\bsG_j,\bsM}=c_j\I$ for some constants
$c_j\in\bbR$. By taking the trace on both sides, we get that
$c_j=\tfrac1n\Tr{\set{\bsG_j,\bsM}} = \tfrac4n \alpha_j$, leading to
\begin{eqnarray}
\set{\bsG_j,\bsM}=\frac4n \alpha_j\I,
\end{eqnarray}
which implies that
\begin{eqnarray*}
\frac4n \alpha_j\bsG_j &=&
\bsG_j\set{\bsG_j,\bsM}=\bsG^2_j\bsM+\bsG_j\bsM\bsG_j =
\bsG^2_j\bsM+\Pa{\frac4n \alpha_j\I-\bsM\bsG_j}\bsG_j\\
&=& \bsG^2_j\bsM+\frac4n \alpha_j\bsG_j - \bsM\bsG^2_j,
\end{eqnarray*}
i.e., $\bsG^2_j\bsM=\bsM\bsG^2_j$. Then $\bsM^2=
\frac2n\norm{\boldsymbol{\alpha}}^2\I$. Now from
\begin{eqnarray}\label{eq:generatorid2}
\bsM^2=(\boldsymbol{\alpha}\cdot\bG)^2 =
\frac2n\norm{\boldsymbol{\alpha}}^2\I+(n-2)\sqrt{\frac2{n(n-1)}}\boldsymbol{\alpha}\star\boldsymbol{\alpha}\cdot\bG,
\end{eqnarray}
we see that $\boldsymbol{\alpha}\star\boldsymbol{\alpha}=\zero$. For
a generic eigenvalue $\lambda$ of $\bsM$, it must be
$\lambda=\pm\sqrt{\tfrac2n}\norm{\boldsymbol{\alpha}}$. Let the
number of positive eigenvalues of $\bsM$ is $m$, then the number of
negative eigenvalue of $\bsM$ is $n-m$. By the fact that
$\Tr{\bsM}=0$, we see that
$$
m\sqrt{\tfrac2n}\norm{\boldsymbol{\alpha}}+(n-m)\Pa{-\sqrt{\tfrac2n}\norm{\boldsymbol{\alpha}}}=0,
$$
i.e., $(2m-n)\sqrt{\tfrac2n}\norm{\boldsymbol{\alpha}}=0$, which is
equivalent $2m-n=0$ or $\norm{\boldsymbol{\alpha}}=0$. If $n$ is
odd, we must have $\boldsymbol{\alpha}=\zero$ because $2m-n\neq0$.

Note that
\begin{eqnarray*}
\norm{\bsx\star\bsy}^2&=& \sum^{n^2-1}_{k=1}\Innerm{\bsx}{\bsD_k}{\bsy}^2 = \frac1{(n-2)^2}\frac{n(n-1)}2\sum^{n^2-1}_{k=1}\Br{\sum_{i,j}x_i d^{(k)}_{ij}y_j}^2\\
&=&\frac1{16}\frac1{(n-2)^2}\frac{n(n-1)}2\sum^{n^2-1}_{k=1}\Br{\Tr{\set{\bsx\cdot\bG,\bsy\cdot\bG}\bsG_k}}^2\\
&=&\frac{n(n-1)}{32(n-2)^2}\Innerm{\vec(\set{\bsx\cdot\bG,\bsy\cdot\bG})}{\sum^{n^2-1}_{k=1}\vec(\bsG_k)\vec(\bsG_k)^\dagger}{\vec(\set{\bsx\cdot\bG,\bsy\cdot\bG})}.
\end{eqnarray*}
Using Eq.~\eqref{eq:generatorid}, we get that
\begin{eqnarray*}
\norm{\bsx\star\bsy}^2&=& \frac{n(n-1)}{32(n-2)^2}\Innerm{\vec(\set{\bsx\cdot\bG,\bsy\cdot\bG})}{2\I\ot\I - \frac2n\vec(\I)\vec(\I)^\dagger}{\vec(\set{\bsx\cdot\bG,\bsy\cdot\bG})}\\
&=&\frac{n(n-1)}{16(n-2)^2}\Br{\Tr{\set{\bsx\cdot\bG,\bsy\cdot\bG}^2} -\frac1n \Pa{\Tr{\set{\bsx\cdot\bG,\bsy\cdot\bG}}}^2}\\
&=&\frac{n(n-1)}{8(n-2)^2}\Br{\Tr{(\bsx\cdot\bG\bsy\cdot\bG)^2}+\Tr{(\bsx\cdot\bG)^2(\bsy\cdot\bG)^2}
- \frac2n\Pa{\Tr{(\bsx\cdot\bG)(\bsy\cdot\bG)}}^2}.
\end{eqnarray*}
Now we see that
$\Tr{(\bsx\cdot\bG)^2(\bsy\cdot\bG)^2}=\frac4n\Inner{\bsx}{\bsx}\Inner{\bsy}{\bsy}+\frac{4(n-2)^2}{n(n-1)}\Inner{\bsx\star\bsx}{\bsy\star\bsy}$.
It follows from that
\begin{eqnarray*}
\norm{\bsx\star\bsy}^2=\frac{n(n-1)}{8(n-2)^2}\Br{\Tr{(\bsx\cdot\bG\bsy\cdot\bG)^2)}+\frac4n\Inner{\bsx}{\bsx}\Inner{\bsy}{\bsy}+\frac{4(n-2)^2}{n(n-1)}\Inner{\bsx\star\bsx}{\bsy\star\bsy}-\frac8n\Inner{\bsx}{\bsy}^2}.
\end{eqnarray*}
In particular, for $\bsx=\bsy$, we see that
\begin{eqnarray*}
\norm{\bsx\star\bsx}^2&=&
\frac{n(n-1)}{4(n-2)^2}\Br{\Tr{(\bsx\cdot\bG)^4} -
\frac1n\Pa{\Tr{(\bsx\cdot\bG)^2}}^2}\\
&=& \frac{n(n-1)}{4(n-2)^2}\Br{\Tr{(\bsx\cdot\bG)^4} -
\frac4n\Inner{\bsx}{\bsx}^2}.
\end{eqnarray*}
This completes the proof.
\end{proof}

\begin{prop}
For any positive integer $n>2$, it holds that
\begin{eqnarray}
\Omega^{\mathrm{ext}}_n = \Set{\bsr\in\bbR^{(n^2-1)}:
\norm{\bsr}=1,\bsr\star\bsr=\bsr}.
\end{eqnarray}
In particular, for $n=2$, $\Omega^{\mathrm{ext}}_2 = \Set{\bsr\in\bbR^3:
\norm{\bsr}=1}$.
\end{prop}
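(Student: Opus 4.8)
The plan is to reduce everything to the defining idempotency relation $\rho(\bsr)^2=\rho(\bsr)$ and to extract the two stated constraints by comparing coefficients in the Hermitian matrix basis $\set{\I,\bsG_1,\dots,\bsG_{n^2-1}}$. First I would set $c:=\sqrt{\tfrac{n(n-1)}2}$, so that $\rho(\bsr)=\tfrac1n\pa{\I+c\,\bsr\cdot\bG}$, and expand
\[
\rho(\bsr)^2=\tfrac1{n^2}\Pa{\I+2c\,\bsr\cdot\bG+c^2(\bsr\cdot\bG)^2}.
\]
The key input is the square $(\bsr\cdot\bG)^2$: expanding with \eqref{eq:GiGj} and using that $f^{(k)}_{ij}$ is antisymmetric in $(i,j)$ while $r_ir_j$ is symmetric, the structure-constant part drops out, and the definition of the star-product turns the remaining $d$-part into $(\bsr\star\bsr)\cdot\bG$. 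This is exactly identity \eqref{eq:generatorid2} specialized to $\boldsymbol\alpha=\bsr$, i.e. $(\bsr\cdot\bG)^2=\tfrac2n\norm{\bsr}^2\I+(n-2)\sqrt{\tfrac2{n(n-1)}}\,(\bsr\star\bsr)\cdot\bG$; a one-line simplification of the constants then gives $c^2(\bsr\cdot\bG)^2=(n-1)\norm{\bsr}^2\I+(n-2)c\,(\bsr\star\bsr)\cdot\bG$.

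Substituting back yields $\rho(\bsr)^2=\tfrac1{n^2}\Br{\Pa{1+(n-1)\norm{\bsr}^2}\I+c\Pa{2\bsr+(n-2)\bsr\star\bsr}\cdot\bG}$, and I would equate this with $\rho(\bsr)=\tfrac1n\pa{\I+c\,\bsr\cdot\bG}$. Since $\set{\I,\bsG_1,\dots,\bsG_{n^2-1}}$ is linearly independent, the $\I$-component forces $1+(n-1)\norm{\bsr}^2=n$, i.e. $\norm{\bsr}=1$, while the $\bG$-component forces $2\bsr+(n-2)\bsr\star\bsr=n\bsr$, i.e. $(n-2)(\bsr\star\bsr-\bsr)=\zero$. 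For $n>2$ the latter is equivalent to $\bsr\star\bsr=\bsr$, which together with $\norm{\bsr}=1$ is precisely the asserted description of $\Omega^{\mathrm{ext}}_n$; when $n=2$ (where $n^2-1=3$ and the star-product is not even defined) the $\bG$-component is automatically satisfied, so only $\norm{\bsr}=1$ survives on $\bbR^3$, giving the last statement. One small point to check is that the membership requirement $\bsr\in\Omega_n$, i.e. $\rho(\bsr)\geqslant\zero$, contributes nothing extra: $\rho(\bsr)$ is Hermitian by construction, and any Hermitian idempotent $P$ satisfies $\Inner{v}{Pv}=\norm{Pv}^2\geqslant0$, so $\rho(\bsr)^2=\rho(\bsr)$ already implies $\rho(\bsr)\geqslant\zero$ (and, with $\Tr{\rho(\bsr)}=1$, that $\rho(\bsr)$ is a rank-one projector).

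There is no deep obstacle here; it is essentially a direct computation. The only places demanding care are the bookkeeping of the constant $c$ when collapsing $c^2(\bsr\cdot\bG)^2$, and the degenerate case $n=2$, where the factor $\tfrac1{n-2}$ in Definition~\ref{def:starproduct} is singular — which is exactly why the statement separates $n>2$ from $n=2$.
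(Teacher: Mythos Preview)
Your proposal is correct and follows essentially the same approach as the paper: expand $\rho(\bsr)^2$ using the identity for $(\bsr\cdot\bG)^2$, compare coefficients in the basis $\set{\I,\bsG_1,\dots,\bsG_{n^2-1}}$, and split off the cases $n>2$ versus $n=2$. You are slightly more careful than the paper in explicitly noting that idempotency already forces $\rho(\bsr)\geqslant\zero$, so the membership $\bsr\in\Omega_n$ in the definition of $\Omega^{\mathrm{ext}}_n$ adds no extra constraint.
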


\begin{proof}
Recall that
\begin{eqnarray}
(\bsr\cdot\bG)^2 =
\frac2n\norm{\bsr}^2\I+(n-2)\sqrt{\frac2{n(n-1)}}\bsr\star\bsr\cdot\bG.
\end{eqnarray}
Note that
\begin{eqnarray*}
\rho(\bsr)^2 &=&
\frac1{n^2}\Pa{\I+2\sqrt{\frac{n(n-1)}2}\bsr\cdot\bG +
\frac{n(n-1)}2(\bsr\cdot\bG)^2}\\
&=&\frac1{n^2}\Br{\Pa{1+(n-1)\norm{\bsr}^2}\I+\sqrt{\frac{n(n-1)}2}\Pa{2\bsr+(n-2)\bsr\star\bsr}\cdot\bG},\\
\rho(\bsr)&=&\frac1{n^2}\Pa{n\I+n\sqrt{\frac{n(n-1)}2}\bsr\cdot\bG}.
\end{eqnarray*}
Then $\rho(\bsr)^2=\rho(\bsr)$ becomes as
\begin{eqnarray}
1+(n-1)\norm{\bsr}^2&=&n\\
\sqrt{\frac{n(n-1)}2}\Pa{2\bsr+(n-2)\bsr\star\bsr}&=&n\sqrt{\frac{n(n-1)}2}\bsr
\end{eqnarray}
That is, $\norm{\bsr}=1$ and $\bsr\star\bsr=\bsr$ for $n>2$; or
$\norm{\bsr}=1$ for $n=2$.
\end{proof}


\section{Main result: additive uncertainty relation}\label{sect:3}

Given two qudit observables $\bsA=a_0\I+\bsa\cdot\bG$ and
$\bsB=b_0\I+\bsb\cdot\bG$, acting on $\bbC^n$. For any qudit state
$\rho=\frac1n\Pa{\I+\sqrt{\frac{n(n-1)}2}\bsr\cdot\bG}$, we see that
\begin{eqnarray*}
\Inner{\bsA}{\rho} = a_0 +
\sqrt{\tfrac{2(n-1)}n}\Inner{\bsa}{\bsr},\quad \Inner{\bsB}{\rho} =
b_0 + \sqrt{\tfrac{2(n-1)}n}\Inner{\bsb}{\bsr}.
\end{eqnarray*}
The so-called additive uncertainty relation is the following matrix
optimization problem:
\begin{eqnarray}
\var_\rho(\bsA)+\var_\rho(\bsB)\geqslant
\min_{\rho\in\density{\bbC^n}}\Br{\var_\rho(\bsA)+\var_\rho(\bsB)}=:m_{\bsA,\bsB}.
\end{eqnarray}
Here $\var_\rho(\bsX)$ is the matrix variance of observable $\bsX$,
defined by $\var_\rho(\bsX):=\Inner{\bsX^2}{\rho} -
\Inner{\bsX}{\rho}^2$, where $\bsX=\bsA,\bsB$ . Because $\rho\mapsto
\var_\rho(\bsA)+\var_\rho(\bsB)$ is \emph{concave} in the argument
$\rho$, we see that
\begin{eqnarray}
m_{\bsA,\bsB}=\min_{\ket{\psi}\in\bbC^n:\norm{\psi}=1}\Pa{\var_\psi(\bsA)+\var_\psi(\bsB)}.
\end{eqnarray}
Note that
\begin{eqnarray*}
\bsA^2 =
\Pa{a^2_0+\frac2n\norm{\bsa}^2}\I+2a_0\bsa\cdot\bG+(n-2)\sqrt{\tfrac2{n(n-2)}}\bsa\star\bsa\cdot\bG.
\end{eqnarray*}
We get that
\begin{eqnarray*}
\Inner{\bsA^2}{\rho}
=a^2_0+\frac2n\norm{\bsa}^2+\Inner{2a_0\sqrt{\tfrac{2(n-1)}{n}}\bsa+\tfrac{2(n-2)}n\bsa\star\bsa}{\bsr}.
\end{eqnarray*}
From the above observation, we see that
\begin{eqnarray}
&&\var_\rho(\bsA)+\var_\rho(\bsB) = \Inner{\bsA^2+\bsB^2}{\rho}
-\Inner{\bsA}{\rho}^2-\Inner{\bsB}{\rho}^2\notag\\
&&=\frac2n\Pa{\norm{\bsa}^2+\norm{\bsb}^2}+\frac{2(n-2)}{n}\Inner{\bsa\star\bsa+\bsb\star\bsb}{\bsr}-\frac{2(n-1)}{n}\Pa{\Inner{\bsa}{\bsr}^2+\Inner{\bsb}{\bsr}^2}.
\end{eqnarray}
Let
\begin{eqnarray}\label{eq:lab}
\ell_{\bsa,\bsb}:=\min_{\bsr\in\Omega^{\text{ext}}_n}\Br{(n-2)\Inner{\bsa\star\bsa+\bsb\star\bsb}{\bsr}-(n-1)\Pa{\Inner{\bsa}{\bsr}^2+\Inner{\bsb}{\bsr}^2}}.
\end{eqnarray}
Thus we get that
\begin{eqnarray}
m_{\bsA,\bsB}=\frac2n(\norm{\bsa}^2+\norm{\bsb}^2+\ell_{\bsa,\bsb}).
\end{eqnarray}
In the following, we consider to calculate $\ell_{\bsa,\bsb}$. In
fact,

\begin{thrm}\label{th:main}
For given two qudit observables $\bsA=a_0\I+\bsa\cdot\bG$ and
$\bsB=b_0\I+\bsb\cdot\bG$ on $\bbC^n$, let
\begin{eqnarray}
\bsT_{\bsa,\bsb}:=(n-2)\sum^{n^2-1}_{k=1}\Tr{\bsO_{\bsa,\bsb}\bsD_k}\bsD_k-(n-1)\bsO_{\bsa,\bsb},
\end{eqnarray}
where $\bsO_{\bsa,\bsb}:=\proj{\bsa}+\proj{\bsb}$ and $\bsD_k$'s are identified
from \eqref{eq:Dk}. It holds that
\begin{eqnarray}\label{eq:lab}
\ell_{\bsa,\bsb} &=&
\min_{\bsr\in\Omega^{\mathrm{ext}}_n}\Innerm{\bsr}{\bsT_{\bsa,\bsb}}{\bsr}\\
m_{\bsA,\bsB} &=& \frac2n\Br{-\frac1{n-1}\Tr{\bsT_{\bsa,\bsb}}+\min_{\bsr\in\Omega^{\mathrm{ext}}_n}\Innerm{\bsr}{\bsT_{\bsa,\bsb}}{\bsr}}.
\end{eqnarray}
\end{thrm}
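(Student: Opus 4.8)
The plan is to prove both identities by a direct algebraic reduction: I will show that on the pure-state manifold $\Omega^{\mathrm{ext}}_n$ the quadratic form $\bsr\mapsto\Innerm{\bsr}{\bsT_{\bsa,\bsb}}{\bsr}$ coincides with the expression being minimized in the definition of $\ell_{\bsa,\bsb}$, and then read off $m_{\bsA,\bsB}$ from the relation $m_{\bsA,\bsB}=\tfrac2n(\norm{\bsa}^2+\norm{\bsb}^2+\ell_{\bsa,\bsb})$ established just before the theorem.

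First I would substitute the definition of $\bsT_{\bsa,\bsb}$ to obtain
\[
\Innerm{\bsr}{\bsT_{\bsa,\bsb}}{\bsr}=(n-2)\sum_{k=1}^{n^2-1}\Tr{\bsO_{\bsa,\bsb}\bsD_k}\,\Innerm{\bsr}{\bsD_k}{\bsr}-(n-1)\Innerm{\bsr}{\bsO_{\bsa,\bsb}}{\bsr}.
\]
Since $\bsO_{\bsa,\bsb}=\proj{\bsa}+\proj{\bsb}$ with $\bsa,\bsb$ real vectors, $\Tr{\bsO_{\bsa,\bsb}\bsD_k}=\Innerm{\bsa}{\bsD_k}{\bsa}+\Innerm{\bsb}{\bsD_k}{\bsb}$, which by the matrix form of the star-product in Proposition~\ref{prop:starprod} is exactly $(\bsa\star\bsa+\bsb\star\bsb)_k$; similarly $\Innerm{\bsr}{\bsD_k}{\bsr}=(\bsr\star\bsr)_k$ and $\Innerm{\bsr}{\bsO_{\bsa,\bsb}}{\bsr}=\Inner{\bsa}{\bsr}^2+\Inner{\bsb}{\bsr}^2$. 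Hence the $k$-sum equals $\Inner{\bsa\star\bsa+\bsb\star\bsb}{\bsr\star\bsr}$. Now I would invoke the characterization of $\Omega^{\mathrm{ext}}_n$ proved immediately above, namely $\bsr\star\bsr=\bsr$ for every $\bsr\in\Omega^{\mathrm{ext}}_n$ when $n>2$, to collapse this to $\Inner{\bsa\star\bsa+\bsb\star\bsb}{\bsr}$. Thus, for all $\bsr\in\Omega^{\mathrm{ext}}_n$,
\[
\Innerm{\bsr}{\bsT_{\bsa,\bsb}}{\bsr}=(n-2)\Inner{\bsa\star\bsa+\bsb\star\bsb}{\bsr}-(n-1)\Pa{\Inner{\bsa}{\bsr}^2+\Inner{\bsb}{\bsr}^2},
\]
and taking the minimum over $\bsr\in\Omega^{\mathrm{ext}}_n$ gives the first identity by the very definition of $\ell_{\bsa,\bsb}$.

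For the formula for $m_{\bsA,\bsB}$ I would then compute $\Tr{\bsT_{\bsa,\bsb}}$. By part (i) of Proposition~\ref{prop:starprod} each $\bsD_k$ is traceless, so the first summand of $\bsT_{\bsa,\bsb}$ contributes nothing and $\Tr{\bsT_{\bsa,\bsb}}=-(n-1)\Tr{\bsO_{\bsa,\bsb}}=-(n-1)(\norm{\bsa}^2+\norm{\bsb}^2)$. Substituting $\norm{\bsa}^2+\norm{\bsb}^2=-\tfrac1{n-1}\Tr{\bsT_{\bsa,\bsb}}$ together with the first identity into $m_{\bsA,\bsB}=\tfrac2n(\norm{\bsa}^2+\norm{\bsb}^2+\ell_{\bsa,\bsb})$ yields the second claim.

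I do not expect a genuine obstacle here: the proof is essentially a translation through the star-product dictionary of Proposition~\ref{prop:starprod}. The one place requiring attention is the replacement $\bsr\star\bsr=\bsr$, which is precisely where the standing assumption $n>2$ enters (mirroring the factor $\tfrac1{n-2}$ in the definition of the $\bsD_k$); the qubit case $n=2$ lies outside this parametrization and is treated separately in the later sections. It is also worth stating explicitly that the resulting optimization is a bona fide constrained quadratic program: the objective $\bsr\mapsto\Innerm{\bsr}{\bsT_{\bsa,\bsb}}{\bsr}$ is a quadratic form on $\bbR^{n^2-1}$ and the feasible set $\Omega^{\mathrm{ext}}_n=\Set{\bsr:\norm{\bsr}=1,\ \bsr\star\bsr=\bsr}$ is cut out by nonlinear (quadratic) equality constraints, which is the structural point the theorem is meant to expose.
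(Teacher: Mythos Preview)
Your proposal is correct and follows essentially the same route as the paper: both arguments unwind $\Innerm{\bsr}{\bsT_{\bsa,\bsb}}{\bsr}$ via the star-product dictionary $\Tr{\bsO_{\bsa,\bsb}\bsD_k}=(\bsa\star\bsa+\bsb\star\bsb)_k$ and $\Innerm{\bsr}{\bsD_k}{\bsr}=(\bsr\star\bsr)_k$, invoke $\bsr\star\bsr=\bsr$ on $\Omega^{\mathrm{ext}}_n$, and then use $\Tr{\bsD_k}=0$ to recover $m_{\bsA,\bsB}$ from $\Tr{\bsT_{\bsa,\bsb}}$. The only cosmetic difference is direction---the paper starts from the defining expression for $\ell_{\bsa,\bsb}$ and rewrites it as the quadratic form, whereas you start from the quadratic form and recover the defining expression.
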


\begin{proof}
Because $\bsr\in\Omega^{\text{ext}}_n$, we see $\bsr=\bsr\star\bsr$.
Then
\begin{eqnarray*}
\Inner{\bsa\star\bsa+\bsb\star\bsb}{\bsr} &=&
\Inner{\bsa\star\bsa+\bsb\star\bsb}{\bsr\star\bsr}\\
&=&\sum^{n^2-1}_{k=1}(\Innerm{\bsa}{\bsD_k}{\bsa}+\Innerm{\bsb}{\bsD_k}{\bsb})\Innerm{\bsr}{\bsD_k}{\bsr}\\
&=&\sum^{n^2-1}_{k=1}\Tr{\bsO_{\bsa,\bsb}\bsD_k}\Innerm{\bsr}{\bsD_k}{\bsr}\\
&=&
\Innerm{\bsr}{\sum^{n^2-1}_{k=1}\Tr{\bsO_{\bsa,\bsb}\bsD_k}\bsD_k}{\bsr}
\end{eqnarray*}
and $\Inner{\bsa}{\bsr}^2+\Inner{\bsb}{\bsr}^2 =
\Innerm{\bsr}{\bsO_{\bsa,\bsb}}{\bsr}$. Therefore
\begin{eqnarray*}
\ell_{\bsa,\bsb} &=&
\min_{\bsr\in\Omega^{\text{ext}}_n}\Br{(n-2)\Inner{\bsa\star\bsa+\bsb\star\bsb}{\bsr}-(n-1)\Pa{\Inner{\bsa}{\bsr}^2+\Inner{\bsb}{\bsr}^2}}\\
&=&
\min_{\bsr\in\Omega^{\text{ext}}_n}\Br{(n-2)\Inner{\bsa\star\bsa+\bsb\star\bsb}{\bsr\star\bsr}-(n-1)\Pa{\Inner{\bsa}{\bsr}^2+\Inner{\bsb}{\bsr}^2}}\\
\\
&=&\min_{\bsr\in\Omega^{\mathrm{ext}}_n}\Br{\Innerm{\bsr}{(n-2)\sum^{n^2-1}_{k=1}\Tr{\bsO_{\bsa,\bsb}\bsD_k}\bsD_k}{\bsr}-(n-1)\Innerm{\bsr}{\bsO_{\bsa,\bsb}}{\bsr}}\\
&=&\min_{\bsr\in\Omega^{\mathrm{ext}}_n}\Innerm{\bsr}{\bsT_{\bsa,\bsb}}{\bsr}.
\end{eqnarray*}
Note that $\Tr{\bsD_k}=0$ for all $k=1,\ldots,n^2-1$. This implies
that
$\Tr{\bsT_{\bsa,\bsb}}=-(n-1)\Tr{\bsO_{\bsa,\bsb}}=-(n-1)(\norm{\bsa}^2+\norm{\bsb}^2)$,
i.e.,
$\norm{\bsa}^2+\norm{\bsb}^2=-\frac1{n-1}\Tr{\bsT_{\bsa,\bsb}}$.
Substituting this expression into the defining equation of
$m_{\bsA,\bsB}$, we get the desired result. This completes the
proof.
\end{proof}

In fact, our method here can also be used to study the additive
uncertainty relation for multiple qudit observables. According to
the above reasoning, we get the following result:
\begin{cor}
For any given $K$-tuple of qudit observables
$(\bsA_1,\ldots,\bsA_K)$ on $\bbC^n$, where $\bsA_\mu=a^{(\mu)}_0\I+\bsa_\mu\cdot\bG$
for $\mu=1,\ldots,K$. Let
\begin{eqnarray}
m_{\bsA_1,\ldots,\bsA_K}&:=&\min_{\rho\in\density{\bbC^n}}\sum^K_{\mu=1}\var_\rho(\bsA_\mu),\\
\bsO_{\bsa_1,\ldots,\bsa_K}&:=&\sum^K_{\mu=1}\proj{\bsa_\mu},\\
\bsT_{\bsa_1,\ldots,\bsa_K}&:=&(n-2)\sum^{n^2-1}_{k=1}\Tr{\bsO_{\bsa_1,\ldots,\bsa_K}\bsD_k}\bsD_k-(n-1)\bsO_{\bsa_1,\ldots,\bsa_K},\\
\ell_{\bsa_1,\ldots,\bsa_K}&:=&
\min_{\bsr\in\Omega^{\mathrm{ext}}_n}\Innerm{\bsr}{\bsT_{\bsa_1,\ldots,\bsa_K}}{\bsr},
\end{eqnarray}
where $\bsD_j$'s are identified from \eqref{eq:Dk}. It holds that
\begin{eqnarray}
m_{\bsA_1,\ldots,\bsA_K} &=&
\frac2n\Pa{\sum^K_{\mu=1}\norm{\bsa_\mu}^2+\ell_{\bsa_1,\ldots,\bsa_K}}\\
&=& \frac2n\Br{-\frac1{n-1}\Tr{\bsT_{\bsa_1,\ldots,\bsa_K}}+\min_{\bsr\in\Omega^{\mathrm{ext}}_n}\Innerm{\bsr}{\bsT_{\bsa_1,\ldots,\bsa_K}}{\bsr}}.
\end{eqnarray}
\end{cor}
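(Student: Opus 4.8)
The plan is to run the proof of Theorem~\ref{th:main} with the two summands replaced by $K$, the only genuinely new input being that every reduction step is linear in the label $\mu$ and therefore survives finite summation unchanged.

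First I would expand each variance separately. Writing $\bsA_\mu = a^{(\mu)}_0\I + \bsa_\mu\cdot\bG$ and invoking $(\bsa_\mu\cdot\bG)^2 = \frac2n\norm{\bsa_\mu}^2\I + (n-2)\sqrt{\frac2{n(n-1)}}\,(\bsa_\mu\star\bsa_\mu)\cdot\bG$ together with $\Inner{\bsa_\mu\cdot\bG}{\rho} = \sqrt{\frac{2(n-1)}{n}}\Inner{\bsa_\mu}{\bsr}$, the $a^{(\mu)}_0$-dependent cross terms cancel between $\Inner{\bsA_\mu^2}{\rho}$ and $\Inner{\bsA_\mu}{\rho}^2$, leaving
\[
\var_\rho(\bsA_\mu) = \tfrac2n\norm{\bsa_\mu}^2 + \tfrac{2(n-2)}{n}\Inner{\bsa_\mu\star\bsa_\mu}{\bsr} - \tfrac{2(n-1)}{n}\Inner{\bsa_\mu}{\bsr}^2 ,
\]
exactly as in the two-observable case. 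Summing over $\mu$ and noting that each $\rho\mapsto\var_\rho(\bsA_\mu)$ is concave, hence so is $\rho\mapsto\sum_\mu\var_\rho(\bsA_\mu)$, the minimum over $\density{\bbC^n}$ is attained on the extreme points, i.e.\ on $\Omega^{\mathrm{ext}}_n$.

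Next, for $\bsr\in\Omega^{\mathrm{ext}}_n$ with $n>2$ I would use $\bsr = \bsr\star\bsr$ and Proposition~\ref{prop:starprod} to convert the term linear in $\bsr\star\bsr$ into a quadratic form in $\bsr$: since $\sum_\mu\Innerm{\bsa_\mu}{\bsD_k}{\bsa_\mu} = \Tr{\bsD_k\sum_\mu\proj{\bsa_\mu}} = \Tr{\bsO_{\bsa_1,\ldots,\bsa_K}\bsD_k}$, one gets $\sum_\mu\Inner{\bsa_\mu\star\bsa_\mu}{\bsr} = \Innerm{\bsr}{\sum_{k}\Tr{\bsO_{\bsa_1,\ldots,\bsa_K}\bsD_k}\bsD_k}{\bsr}$, while $\sum_\mu\Inner{\bsa_\mu}{\bsr}^2 = \Innerm{\bsr}{\bsO_{\bsa_1,\ldots,\bsa_K}}{\bsr}$. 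Collecting the pieces gives $\sum_\mu\var_\rho(\bsA_\mu) = \frac2n\Pa{\sum_\mu\norm{\bsa_\mu}^2 + \Innerm{\bsr}{\bsT_{\bsa_1,\ldots,\bsa_K}}{\bsr}}$, and minimizing over $\bsr\in\Omega^{\mathrm{ext}}_n$ yields the first displayed formula for $m_{\bsA_1,\ldots,\bsA_K}$. For the second, I would use $\Tr{\bsD_k}=0$ from Proposition~\ref{prop:starprod}(i) to get $\Tr{\bsT_{\bsa_1,\ldots,\bsa_K}} = -(n-1)\Tr{\bsO_{\bsa_1,\ldots,\bsa_K}} = -(n-1)\sum_\mu\norm{\bsa_\mu}^2$, hence $\sum_\mu\norm{\bsa_\mu}^2 = -\frac1{n-1}\Tr{\bsT_{\bsa_1,\ldots,\bsa_K}}$, and substitute. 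There is essentially no real obstacle here, since the argument is a word-for-word transcription of the $K=2$ case; the only points needing a line of care are that concavity is preserved under finite summation (immediate) and the degenerate case $n=2$, where the $\bsD_k$ are undefined but enter only through the vanishing factor $(n-2)$, so that $\bsT_{\bsa_1,\ldots,\bsa_K}=-\bsO_{\bsa_1,\ldots,\bsa_K}$ and $\Omega^{\mathrm{ext}}_2=\Set{\bsr\in\bbR^3:\norm{\bsr}=1}$, with both formulas still valid.
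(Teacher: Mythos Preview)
Your proposal is correct and follows exactly the approach the paper takes: the paper's own proof simply states that one repeats the proof of Theorem~\ref{th:main} $K$ times and omits the details, which is precisely what you spell out. Your added remarks on the preservation of concavity under summation and the degenerate $n=2$ case are helpful elaborations but do not constitute a different route.
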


\begin{proof}
The proof can be easily obtained by repeating $K$ times the proof of
Theorem~\ref{th:main}. It is omitted here.
\end{proof}

\begin{remark}\label{rem:3.3}
From the above theorem, we can see that the optimization problem
\eqref{eq:lab} is the famous quadratic programming
\cite{Nocedal2006,Best2017,Dostal2009} with the following prototype:
$\min_{\bsr\in\Omega^{\text{ext}}_n}\Innerm{\bsr}{\bsT}{\bsr}$,
where $\bsT$ is an $(n^2-1)\times (n^2-1)$ real symmetric matrix.
Namely,
\begin{center}
\fcolorbox{purple}{lightgray}{
\parbox{7cm}{
\begin{eqnarray*}
\min \Innerm{\bsr}{\bsT}{\bsr}\\
\text{subject to: } \norm{\bsr}=1,\\
\bsr\star\bsr=\bsr.
\end{eqnarray*}}
}
\end{center}
In Appendix~\ref{app:algorithm}, we present a universal algorithm to
find the minimum of $\Innerm{\bsr}{\bsT}{\bsr}$, where
$\bsr\in\Omega^{\mathrm{ext}}_n$ for $n=3$.

In addition, such problem is also related to the numerical range of
$\bsT$, defined by $\set{\Innerm{\bsr}{\bsT}{\bsr}:\norm{\bsr}=1}$.
We have already known that the numerical range of $\bsT$ is just the
closed interval $[\lambda_{\min}(\bsT),\lambda_{\max}(\bsT)]$, where
$\lambda_{\max/\min}(\bsT)$ is the maximal/minimal eigenvalue of
$\bsT$. Clearly the set
$\set{\Innerm{\bsr}{\bsT}{\bsr}:\bsr\in\Omega^{\text{ext}}_n}$ is
also a closed subinterval of the numerical range of $\bsT$, called
the \emph{constrained numerical range}. The minimal boundary point
of $\set{\Innerm{\bsr}{\bsT}{\bsr}:\bsr\in\Omega^{\text{ext}}_n}$ is
the desired one.
\end{remark}

\begin{remark}
Let
\begin{eqnarray}
\cU_{\text{m}}(n,K):=\Set{\bsr\in\Omega^{\text{ext}}_n:\Innerm{\bsr}{\bsT}{\bsr} = \ell_{\bsa_1,\ldots,\bsa_K}}.
\end{eqnarray}
Using such set, we can construct the set of all qudit states of
\emph{minimal uncertainty} in the sense of
$\min_{\rho\in\density{\complex^n}}[\var_\rho(\bsA)+\var_\rho(\bsB)]=m_{\bsA,\bsB}$:
\begin{eqnarray}
\cD_{\text{m}}(n,K):=\Set{\rho(\bsr):\bsr\in \cU_{\text{m}}(n,K)}.
\end{eqnarray}
\end{remark}

\section{Typical example I: the qubit state case}\label{sect:4}

The Pauli matrices are the generator of $\su(2)$, which are given by the following:
\begin{eqnarray}
\bsG_1=\Pa{\begin{array}{cc}
0&1\\
1&0\\
\end{array}},\quad \bsG_2=\Pa{\begin{array}{cc}
0&-\mathrm{i}\\
\mathrm{i}&0\\
\end{array}},\quad \bsG_3=\Pa{\begin{array}{cc}
1&0\\
0&-1\\
\end{array}}.
\end{eqnarray}
Instead of using $\bsG_k$'s, we use the notation $\sigma_k$ to replace $\bsG_k$ in the qubit system. The Pauli matrices satisfy the identities:
\begin{eqnarray}
\sigma^2_k&=&\I,\\
\set{\sigma_i,\sigma_j}&=&2\delta_{ij}\I,\\
\sigma_i\sigma_j&=&\delta_{ij}\I+\sqrt{-1}\epsilon_{ijk}\sigma_k,
\end{eqnarray}
where $\epsilon$ is the permutation symbol. These facts leads to the fact that $\bsD_1=\bsD_2=\bsD_3=\zero$, i.e., $\bsx\star\bsy=\zero$ for $\bsx,\bsy\in\bbR^3$.

For any qubit observables $\bsA=a_0\I+\bsa\cdot\boldsymbol{\sigma}$ and $\bsB=b_0\I+\bsb\cdot\boldsymbol{\sigma}$, we see from Theorem~\ref{th:main} that $\bsT_{\bsa,\bsb}=-\bsO_{\bsa,\bsb}$ and
\begin{eqnarray*}
m_{\bsA,\bsB} &=& \norm{\bsa}^2+\norm{\bsb}^2+\min_{\bsr\in\Omega^{\text{ext}}_2}\Innerm{\bsr}{\bsT_{\bsa,\bsb}}{\bsr}\\
&=& \Tr{\bsO_{\bsa,\bsb}}-\max\set{\Innerm{\bsr}{\bsO_{\bsa,\bsb}}{\bsr}:\norm{\bsr}=1}\\
&=& \Tr{\bsO_{\bsa,\bsb}}-\lambda_{\max}(\bsO_{\bsa,\bsb}),
\end{eqnarray*}
where $\lambda_{\max}(\bsO_{\bsa,\bsb})$ is attained at any one element in the following set
\begin{eqnarray}
\cU_{\text{m}}(2,2):=\Set{\bsr\in\Omega^{\text{ext}}_2:\Innerm{\bsr}{\bsO_{\bsa,\bsb}}{\bsr}=\lambda_{\max}(\bsO_{\bsa,\bsb})}.
\end{eqnarray}
In fact, in the above set, we can choose the eigenvector $\tilde \bsr=(r_1,r_2,r_3)$ of $\bsO_{\bsa,\bsb}$ corresponding to the eigenvalue $\lambda_{\max}(\bsO_{\bsa,\bsb})$, i.e., $\bsO_{\bsa,\bsb}\ket{\tilde\bsr}=\lambda_{\max}(\bsO_{\bsa,\bsb})\ket{\tilde\bsr}$. This means that
\begin{eqnarray}
\min_{\psi\in\bbC^2:\norm{\psi}=1}\Br{\var_\psi(\bsA)+\var_\psi(\bsB)} = \var_{\tilde\psi}(\bsA)+\var_{\tilde\psi}(\bsB),
\end{eqnarray}
where $\tilde\psi=\frac12(\I+\tilde\bsr\cdot\boldsymbol{\sigma})$, which is the state of the so-called \emph{minimal uncertainty} in the sense above. All states of minimal uncertainty is just the following set
\begin{eqnarray}
\cD_{\text{m}}(2,2)=\Set{\psi(\bsr):\bsr\in\cU_{\text{m}}(2,2)}.
\end{eqnarray}
Besides, $\bsO_{\bsa,\bsb}$ is unitarily equivalent to the $2\times 2$ matrix $\Pa{\begin{array}{cc}
\Inner{\bsa}{\bsa}&\Inner{\bsa}{\bsb}\\
\Inner{\bsa}{\bsb}&\Inner{\bsb}{\bsb}\\
\end{array}}$ whose minimal eigenvalue is just $\Tr{\bsO_{\bsa,\bsb}}-\lambda_{\max}(\bsO_{\bsa,\bsb})$. This recovers the result in \cite[Theorem~2.2]{Zhang2021}.

For any finite number of qubit observables $\bsA_k=a^{(k)}_0\I+\bsa_k\cdot\boldsymbol{\sigma}$, where $k=1,\ldots,K$, we have
\begin{eqnarray*}
m_{\bsA_1,\ldots,\bsA_K} = \Tr{\bsO_{\bsa_1,\ldots,\bsa_K}}-\lambda_{\max}(\bsO_{\bsa_1,\ldots,\bsa_K}).
\end{eqnarray*}
where $\lambda_{\max}(\bsO_{\bsa_1,\ldots,\bsa_K})$ is attained at
any one element in the following set
\begin{eqnarray}
\cU_{\text{m}}(2,K):=\Set{\bsr\in\Omega^{\text{ext}}_2:\Innerm{\bsr}{\bsO_{\bsa_1,\ldots,\bsa_K}}{\bsr}=\lambda_{\max}(\bsO_{\bsa_1,\ldots,\bsa_K})}.
\end{eqnarray}
All states of minimal uncertainty is just the following set
\begin{eqnarray}
\cD_{\text{m}}(2,K)=\Set{\psi(\bsr):\bsr\in\cU_{\text{m}}(2,K)}.
\end{eqnarray}

\section{Typical example II: the qutrit state case}\label{sect:5}

The so-called \emph{Gell-Mann matrices} are the following $3\times3$
Hermitian matrices, given by
\begin{eqnarray*}
&&\bsG_1 =\Pa{\begin{array}{ccc}
0&1&0\\
1&0&0\\
0&0&0\\
\end{array}}, \bsG_2=\Pa{\begin{array}{ccc}
0&-\mathrm{i}&0\\
\mathrm{i}&0&0\\
0&0&0\\
\end{array}},\bsG_3 = \Pa{\begin{array}{ccc}
1&0&0\\
0&-1&0\\
0&0&0\\
\end{array}},\bsG_4=\Pa{\begin{array}{ccc}
0&0&1\\
0&0&0\\
1&0&0\\
\end{array}},\\&&
\bsG_5=\Pa{\begin{array}{ccc}
0&0&-\mathrm{i}\\
0&0&0\\
\mathrm{i}&0&0\\
\end{array}},\bsG_6=\Pa{\begin{array}{ccc}
0&0&0\\
0&0&1\\
0&1&0\\
\end{array}},\bsG_7 = \Pa{\begin{array}{ccc}
0&0&0\\
0&0&-\mathrm{i}\\
0&\mathrm{i}&0\\
\end{array}},\bsG_8=\Pa{
\begin{array}{ccc}
\tfrac1{\sqrt{3}}&0&0\\
0&\tfrac1{\sqrt{3}}&0\\
0&0&-\tfrac2{\sqrt{3}}\\
\end{array}}.
\end{eqnarray*}

Some useful properties of Gell-Mann matrices are listed below.
\begin{itemize}
\item $\Tr{\bsG_k}=0$, where $k=1,\ldots,8$;
\item $\Inner{\bsG_i}{\bsG_j}=2\delta_{ij}$, where $1\leqslant i,j\leqslant8$;
\item $\Set{\I,\bsG_1,\ldots,\bsG_8}$ is an orthogonal matrix
basis such that each $3\times 3$ Hermitian matrix $\bsA$ can be
represented by
\begin{eqnarray*}
\bsA = a_0\I+\bsa\cdot\bG,\quad (a_0,\bsa)\in\bbR^9,
\end{eqnarray*}
where $\bsa\cdot\bG:=\sum^8_{k=1}a_k\bsG_k$ for
$\bG:=(\bsG_1,\ldots,\bsG_8)$.
\item $\Inner{\bsa\cdot\bG}{\bsb\cdot\bG}=2\Inner{\bsa}{\bsb}$, where $\bsa,\bsb\in\bbR^8$.
\item If $3\times 3$ Hermitian matrix $\bsA=a_0\I+\bsa\cdot\bG$, then we
have
\begin{eqnarray*}
a_0=\frac13\Tr{\bsA},\quad
a_k=\frac12\Inner{\bsA}{\bsG_k}\quad(k=1,\ldots,8).
\end{eqnarray*}
\item The algebraic structure of these matrices is determined by the
product property:
\begin{eqnarray*}
\bsG_i\bsG_j=\tfrac23\delta_{ij}\I+(d^{(k)}_{ij}+\mathrm{i}f^{(k)}_{ij})\bsG_k,
\end{eqnarray*}
where the expansion coefficients $d_{ijk}:=d^{(k)}_{ij}$ are
\emph{totally symmetric}. The numerical values of all the
independent non-vanishing components of $d_{ijk}$ are given by:
\begin{eqnarray*}
d_{118}=d_{228}=d_{338}=-d_{888}=\frac1{\sqrt{3}},\\
d_{146}=d_{157}=-d_{247}=d_{256}=\frac12,\\
d_{344}=d_{355}=-d_{366}=-d_{377}=\frac12,\\
d_{448}=d_{558}=d_{668}=d_{778}=-\frac1{2\sqrt{3}},
\end{eqnarray*}
and the expansion coefficients $f_{ijk}:=f^{(k)}_{ij}$, the
structure constants of the Lie algebra of $\SU(3)$, are
\emph{totally antisymmetric} in their indices. The numerical values
of all the independent non-vanishing components of $f_{ijk}$ are
given by:
\begin{eqnarray*}
f_{123}=1,\quad f_{458}=f_{678}=\frac{\sqrt{3}}2,\\
f_{147}=f_{246}=f_{257}=f_{345}=f_{516}=f_{637}=\frac12.
\end{eqnarray*}
\end{itemize}

For a convenient usage, we rewrite
star-product as following form:
\begin{prop}\label{prop:starprod8}
For $n=3$, the star-product in $\bbR^8$ can be represented by
\begin{eqnarray*}
\bsx\star\bsy =
(\Innerm{\bsx}{\bsD_1}{\bsy},\ldots,\Innerm{\bsx}{\bsD_8}{\bsy})\quad(\forall\bsx,\bsy\in\bbR^8),
\end{eqnarray*}
where $\bsD_k(k=1,\ldots,8)$ are displayed in
Appendix~\ref{app:3Dk}. Moreover, $\bsx\star\bsy=\bsy\star\bsx$ and
$$
\Inner{\bsx\star\bsx}{\bsx\star\bsx}=\Inner{\bsx}{\bsx}^2\Longleftrightarrow\norm{\bsx\star\bsx}=\norm{\bsx}^2.
$$
\end{prop}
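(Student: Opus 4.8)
The plan is to split the statement into its three pieces and dispatch the two easy ones before attacking the one with actual content. The symmetry $\bsx\star\bsy=\bsy\star\bsx$ is immediate: since $d_{ijk}=d^{(k)}_{ij}$ is totally symmetric in its indices, each matrix $\bsD_k$ of \eqref{eq:Dk} is real symmetric, so $(\bsx\star\bsy)_k=\Innerm{\bsx}{\bsD_k}{\bsy}=\Innerm{\bsy}{\bsD_k}{\bsx}=(\bsy\star\bsx)_k$ for every $k$; this uses nothing special about $n=3$. The displayed biconditional is likewise automatic: $\Inner{\bsx\star\bsx}{\bsx\star\bsx}=\norm{\bsx\star\bsx}^2$ while $\Inner{\bsx}{\bsx}^2=\norm{\bsx}^4=\bigl(\norm{\bsx}^2\bigr)^2$, and since $\norm{\bsx\star\bsx}$ and $\norm{\bsx}^2$ are both nonnegative, the two equalities are equivalent by taking square roots. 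So the only real content is the identity $\norm{\bsx\star\bsx}=\norm{\bsx}^2$, valid for every $\bsx\in\bbR^8$, and that is what I would actually prove.

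For that I would feed the case $n=3$ into item~(iii) of Proposition~\ref{prop:starprod}, which gives
\[
\norm{\bsx\star\bsx}^2=\tfrac{3}{2}\,\Tr{(\bsx\cdot\bG)^4}-2\,\Inner{\bsx}{\bsx}^2 ,
\]
reducing everything to the claim $\Tr{(\bsx\cdot\bG)^4}=2\Inner{\bsx}{\bsx}^2$. Write $\bsM:=\bsx\cdot\bG$, a traceless $3\times 3$ Hermitian matrix with $\Tr{\bsM^2}=\Inner{\bsx\cdot\bG}{\bsx\cdot\bG}=2\Inner{\bsx}{\bsx}$ (by $\Inner{\bsG_i}{\bsG_j}=2\delta_{ij}$). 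Because $\bsM$ is $3\times 3$ with $\Tr{\bsM}=0$, its characteristic polynomial is $t^3-\tfrac12\Tr{\bsM^2}\,t-\det\bsM$, so Cayley--Hamilton yields $\bsM^3=\tfrac12\Tr{\bsM^2}\,\bsM+(\det\bsM)\I$; multiplying by $\bsM$ and taking traces gives $\Tr{\bsM^4}=\tfrac12\bigl(\Tr{\bsM^2}\bigr)^2$, the determinant term dropping out since $\Tr{\bsM}=0$. Hence $\Tr{(\bsx\cdot\bG)^4}=\tfrac12(2\Inner{\bsx}{\bsx})^2=2\Inner{\bsx}{\bsx}^2$, and substituting back produces $\norm{\bsx\star\bsx}^2=3\Inner{\bsx}{\bsx}^2-2\Inner{\bsx}{\bsx}^2=\Inner{\bsx}{\bsx}^2=\norm{\bsx}^4$, i.e. $\norm{\bsx\star\bsx}=\norm{\bsx}^2$.

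I do not anticipate a genuine obstacle here; the whole argument rests on the single power-sum identity $\Tr{\bsM^4}=\tfrac12(\Tr{\bsM^2})^2$ for traceless $3\times 3$ matrices, which drops out of Cayley--Hamilton (or of Newton's identities applied to three eigenvalues summing to zero). If one preferred a purely computational confirmation, the identity could instead be checked directly from the explicit matrices $\bsD_1,\dots,\bsD_8$ listed in the appendix, but that is not necessary. It is worth flagging that this norm identity is special to $n=3$: for $n\ge 4$ the number $\Tr{\bsM^4}$ is no longer determined by $\Tr{\bsM^2}$ alone, so $\norm{\bsx\star\bsx}=\norm{\bsx}^2$ fails in general, which is consistent with the genuinely more complicated formula in Proposition~\ref{prop:starprod}(iii).
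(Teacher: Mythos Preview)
Your proposal is correct and follows essentially the same route as the paper: both invoke Proposition~\ref{prop:starprod}(iii) at $n=3$ to reduce to the identity $\Tr{(\bsx\cdot\bG)^4}=2\Inner{\bsx}{\bsx}^2$, which the paper simply asserts ``using the Gell-Mann matrices'' while you supply a clean Cayley--Hamilton justification. Your added remark that the identity is special to $n=3$ is accurate and worth keeping.
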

For a completeness, here we present a proof again albeit they
appeared in other literatures.
\begin{proof}
In fact, $\bsx\star\bsy =
(\Innerm{\bsx}{\bsD_1}{\bsy},\ldots,\Innerm{\bsx}{\bsD_8}{\bsy})$
can be derived from the Definition~\ref{def:starproduct} of the
symmetric star-product. Due to the symmetry of all matrices
$\bsD_k$'s,
$\Innerm{\bsx}{\bsD_k}{\bsy}=\Innerm{\bsy}{\bsD_k}{\bsx}$ for all
$k=1,\ldots,8$. This indicates that $\bsx\star\bsy=\bsy\star\bsx$.
Finally, we see from Proposition~\ref{prop:starprod} (iii) that
\begin{eqnarray*}
\norm{\bsx\star\bsx}^2&=&\frac32\Tr{(\bsx\cdot\bG)^4} -
2\Inner{\bsx}{\bsx}^2\\
&=&
3\Inner{\bsx}{\bsx}^2-2\Inner{\bsx}{\bsx}^2=\Inner{\bsx}{\bsx}^2,
\end{eqnarray*}
where we use the Gell-Mann matrices to derive the fact
$\Tr{(\bsx\cdot\bG)^4}=2\Inner{\bsx}{\bsx}^2$ in the second
equality, that is, $\norm{\bsx\star\bsx}=\norm{\bsx}^2$. This
completes the proof.
\end{proof}
Some properties of star product can also be found in
\cite{Arvind1997,Ercolessi2001,Goyal2016}.

\begin{lem}\label{lem:pos}
For a generic $3\times 3$ Hermitian matrix $\rho$ of fixed-trace
one, it is positive semi-definite if and only if
$\Tr{\rho^2}\leqslant1$ and $\det(\rho)\geqslant0$.
\end{lem}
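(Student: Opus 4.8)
The plan is to reduce the matrix inequality to the sign pattern of the characteristic polynomial of $\rho$. Since $\rho$ is Hermitian it has three real eigenvalues $\lambda_1,\lambda_2,\lambda_3$, and by definition $\rho\geqslant\zero$ means $\lambda_1,\lambda_2,\lambda_3\geqslant0$. Denoting by $e_1,e_2,e_3$ the elementary symmetric polynomials in the $\lambda_i$'s, the trace normalization reads $e_1=\Tr{\rho}=1$, while $\Tr{\rho^2}=\sum_i\lambda_i^2=e_1^2-2e_2=1-2e_2$ and $\det(\rho)=\lambda_1\lambda_2\lambda_3=e_3$. Thus the two hypotheses $\Tr{\rho^2}\leqslant1$ and $\det(\rho)\geqslant0$ are exactly $e_2\geqslant0$ and $e_3\geqslant0$, and it remains to prove that, under $e_1=1$, one has $\lambda_1,\lambda_2,\lambda_3\geqslant0$ if and only if $e_2\geqslant0$ and $e_3\geqslant0$.

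The forward implication is immediate, since nonnegative $\lambda_i$'s make each $e_j$ nonnegative. For the converse I would use the characteristic polynomial $p(t)=\det(t\I-\rho)=t^3-t^2+e_2t-e_3$, whose roots are precisely $\lambda_1,\lambda_2,\lambda_3$. For every $s>0$,
\begin{eqnarray*}
-p(-s)=s^3+s^2+e_2 s+e_3\geqslant s^3+s^2>0,
\end{eqnarray*}
because $e_2,e_3\geqslant0$; hence $p(t)<0$ for all $t<0$, so $p$ has no zero in $(-\infty,0)$. As all three roots of $p$ are real, they must lie in $[0,\infty)$, i.e. $\rho\geqslant\zero$. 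In particular the equivalence holds for every trace-one Hermitian matrix on $\bbC^3$, not merely a ``generic'' one.

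I do not anticipate a genuine obstacle; the only delicate point is the bookkeeping of signs in $p(-s)$ together with the observation that it is the strict inequality $p(t)<0$ on $(-\infty,0)$, rather than $p(t)\leqslant0$, that rules out a negative eigenvalue, since a negative root would force $p$ to vanish there. A more computational alternative---supposing $\rho$ has a negative eigenvalue, splitting according to whether it has one or two negative eigenvalues, and using $e_1=1$ together with the reality of the spectrum to contradict $e_2\geqslant0$ or $e_3\geqslant0$---also works, but the polynomial-positivity route is shorter and free of case analysis.
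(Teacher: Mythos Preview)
Your proof is correct, and it takes a genuinely different route from the paper's. The paper orders the eigenvalues as $\lambda_1\geqslant\lambda_2\geqslant\lambda_3$, uses $\Tr{\rho^2}\leqslant1$ to pin down $\lambda_1\in[1/3,1]$, then splits into the cases $\lambda_1=1$ and $\lambda_1\in[1/3,1)$; in the latter case $\lambda_2+\lambda_3>0$ forces $\lambda_2>0$, and finally $\det(\rho)=\lambda_1\lambda_2\lambda_3\geqslant0$ yields $\lambda_3\geqslant0$. You instead translate the two hypotheses into $e_2\geqslant0$ and $e_3\geqslant0$ and observe in one line that the characteristic polynomial $p(t)=t^3-t^2+e_2t-e_3$ is strictly negative on $(-\infty,0)$, so it cannot vanish there. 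Your argument is shorter, free of case analysis, and is essentially a clean instance of the sign rule that a real polynomial all of whose coefficients in $p(-t)$ are nonnegative (with the leading one positive) has no negative root; it would also scale more cleanly to the analogous $n\times n$ statement. The paper's approach, by contrast, gives more explicit quantitative information about where each eigenvalue sits, which is not needed for the lemma but matches the hands-on style of the surrounding section.
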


\begin{proof}
Let $\lambda_k$ be all eigenvalues of $\rho$ with
$\lambda_1\geqslant\lambda_2\geqslant\lambda_3$. Then we have that
$\lambda_1+\lambda_2+\lambda_3=1$, which leads to the result
$\lambda_1\geqslant\frac13\geqslant\lambda_3$. Now
$\Tr{\rho^2}\leqslant1$ means that
$\lambda^2_1+\lambda^2_2+\lambda^2_3\leqslant1$, which implies that
$\lambda^2_k\leqslant1$, where $k=1,2,3$. Thus
$1\geqslant\lambda_1\geqslant\frac13$, thus
$\lambda_2+\lambda_3=1-\lambda_1\in[0,\frac23]$. If $\lambda_1=1$,
then $\lambda_2=\lambda_3=0$ and thus $\rho\geqslant\zero$; if
$\lambda_1\in[\frac13,1)$, then
$\lambda_2+\lambda_3=1-\lambda_1\in(0,\frac23]$, together with
$\lambda_2\geqslant\lambda_3$, implying that $\lambda_2>0$. Now
$\det(\rho)=\lambda_1\lambda_2\lambda_3\geqslant0$ amounts to say
that $\lambda_3\geqslant0$. Therefore $\lambda_k\geqslant0$ for all
$k=1,2,3$. That is, $\rho\geqslant\zero$.

Based on the above observation, we get that $\rho\geqslant\zero$
whenever $\Tr{\rho^2}\leqslant1$ and $\det(\rho)\geqslant0$ for the
$3\times 3$ Hermitian matrix $\rho$ of fixed-trace one.
\end{proof}

With these preparations, we can now derive very quickly the
characterizations of $\Omega_3$ and $\Omega^{\text{ext}}_3$.

\begin{prop}
It holds that
\begin{eqnarray}
\Omega_3&=&\Set{\bsr\in\bbR^8: \norm{\bsr}\leqslant1,
1+2\Inner{\bsr}{\bsr\star\bsr}\geqslant3\Inner{\bsr}{\bsr}},\\
\Omega^{\mathrm{ext}}_3 &=&\Set{\bsr\in\bbR^8: \norm{\bsr}=1,
\bsr\star\bsr=\bsr}.
\end{eqnarray}
\end{prop}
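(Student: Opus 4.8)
The plan is to read off the positivity condition for the Gell--Mann vector via Lemma~\ref{lem:pos} and convert its two scalar constraints into polynomial inequalities in $\bsr$. For $n=3$ the representation is $\rho(\bsr)=\frac13\Pa{\I+\sqrt3\,\bsr\cdot\bG}$, which has unit trace, so Lemma~\ref{lem:pos} tells us $\rho(\bsr)\geqslant\zero$ exactly when $\Tr{\rho(\bsr)^2}\leqslant1$ and $\det\rho(\bsr)\geqslant0$. Hence the whole task reduces to expressing the two invariants $\Tr{\rho(\bsr)^2}$ and $\det\rho(\bsr)$ as functions of $\bsr$.

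First I would expand, using the $n=3$ squaring identity $(\bsr\cdot\bG)^2=\frac23\Inner{\bsr}{\bsr}\I+\frac1{\sqrt3}(\bsr\star\bsr)\cdot\bG$ already used above, that $\rho(\bsr)^2=\frac19\Br{(1+2\Inner{\bsr}{\bsr})\I+\sqrt3\,(2\bsr+\bsr\star\bsr)\cdot\bG}$, and then $\rho(\bsr)^3=\rho(\bsr)^2\rho(\bsr)$. Taking traces with $\Tr{\bsG_k}=0$ and $\Inner{\bsu\cdot\bG}{\bsv\cdot\bG}=2\Inner{\bsu}{\bsv}$ gives $\Tr{\rho(\bsr)^2}=\frac13\Pa{1+2\Inner{\bsr}{\bsr}}$ and $\Tr{\rho(\bsr)^3}=\frac19\Pa{1+6\Inner{\bsr}{\bsr}+2\Inner{\bsr}{\bsr\star\bsr}}$. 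Feeding these into the $3\times3$ identity $\det M=\frac16\Br{(\Tr M)^3-3\Tr M\Tr{M^2}+2\Tr{M^3}}$ with $M=\rho(\bsr)$ (and $\Tr{\rho(\bsr)}=1$) collapses to $\det\rho(\bsr)=\frac1{27}\Pa{1-3\Inner{\bsr}{\bsr}+2\Inner{\bsr}{\bsr\star\bsr}}$. Therefore $\Tr{\rho(\bsr)^2}\leqslant1\Longleftrightarrow\norm{\bsr}\leqslant1$, and $\det\rho(\bsr)\geqslant0\Longleftrightarrow1+2\Inner{\bsr}{\bsr\star\bsr}\geqslant3\Inner{\bsr}{\bsr}$; combining the two yields exactly the claimed description of $\Omega_3$.

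For $\Omega^{\mathrm{ext}}_3$ one additionally demands that $\rho(\bsr)$ be a rank-one projection. This is the $n=3$ instance of the proposition proved just before Section~\ref{sect:3}: unpacking $\rho(\bsr)^2=\rho(\bsr)$ via the explicit forms of $\rho(\bsr)^2$ and $\rho(\bsr)$ forces $1+2\Inner{\bsr}{\bsr}=3$ and $2\bsr+\bsr\star\bsr=3\bsr$, i.e. $\norm{\bsr}=1$ and $\bsr\star\bsr=\bsr$. (Equivalently, restrict to $\Omega_3$ and impose purity $\Tr{\rho(\bsr)^2}=1$, i.e. $\norm{\bsr}=1$; then $\det\rho(\bsr)=\frac2{27}\Pa{\Inner{\bsr}{\bsr\star\bsr}-1}$ must vanish, and since $\norm{\bsr\star\bsr}=\norm{\bsr}^2=1$ by Proposition~\ref{prop:starprod8}, the Cauchy--Schwarz equality case in $\Inner{\bsr}{\bsr\star\bsr}=1=\norm{\bsr}\,\norm{\bsr\star\bsr}$ pins down $\bsr\star\bsr=\bsr$.) The only mildly delicate step in the whole argument is keeping the star-product term straight when forming $\rho(\bsr)^2\rho(\bsr)$ in the computation of $\Tr{\rho(\bsr)^3}$; everything after that is a mechanical substitution into the trace--determinant identity.
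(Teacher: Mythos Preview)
Your proof is correct and follows essentially the same route as the paper: invoke Lemma~\ref{lem:pos}, compute $\Tr{\rho(\bsr)^2}$ and $\det\rho(\bsr)$ in terms of $\bsr$, and read off the two inequalities. The only cosmetic differences are that you make the determinant computation explicit via $\Tr{\rho(\bsr)^3}$ and the Newton identity (the paper just asserts the formula), and for $\Omega^{\mathrm{ext}}_3$ you primarily cite the already-proved general $\rho(\bsr)^2=\rho(\bsr)$ characterization while offering the Cauchy--Schwarz argument as a parenthetical alternative, whereas the paper takes the Cauchy--Schwarz route (using $\det\rho(\bsr)\geqslant0$ to get $\Inner{\bsr}{\bsr\star\bsr}\geqslant1$ and then squeezing).
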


\begin{proof}
In fact, for the parametrization,
$\rho(\bsr)=\frac13(\I+\sqrt{3}\bsr\cdot\bG)$, we see that
\begin{eqnarray}
\Tr{\rho(\bsr)^2}&=& \frac{1+2\norm{\bsr}^2}3,\\
\det(\rho(\bsr)) &=&
\frac{1+2\Inner{\bsr}{\bsr\star\bsr}-3\Inner{\bsr}{\bsr}}{27}.
\end{eqnarray}
Thus by Lemma~\ref{lem:pos}, $\Tr{\rho(\bsr)^2}\leqslant1$ is
equivalent to $\norm{\bsr}\leqslant1$, and
$\det(\rho(\bsr))\geqslant0$ is equivalent to
$1+2\Inner{\bsr}{\bsr\star\bsr}\geqslant3\Inner{\bsr}{\bsr}$. Put
together, we get that
$$
\Omega_3=\Set{\bsr\in\bbR^8: \norm{\bsr}\leqslant1,
1+2\Inner{\bsr}{\bsr\star\bsr}\geqslant3\Inner{\bsr}{\bsr}}.
$$
Now for $\bsr\in\Omega^{\mathrm{ext}}_3$, we get that
$\bsr\in\Omega_3$ and $\Tr{\rho(\bsr)^2}=1$. That is,
$\bsr\in\Omega_3$ and $\norm{\bsr}=1$. Substituting $\norm{\bsr}=1$
into $1+2\Inner{\bsr}{\bsr\star\bsr}\geqslant3\Inner{\bsr}{\bsr}=3$,
reduced to $\Inner{\bsr}{\bsr\star\bsr}\geqslant1$. The using
Cauchy-Schwarz inequality, we see that
\begin{eqnarray}
1\leqslant \Inner{\bsr}{\bsr\star\bsr}\leqslant
\norm{\bsr}\norm{\bsr\star\bsr}=\norm{\bsr}^3=1.
\end{eqnarray}
This inequality is saturated iff $\bsr\propto\bsr\star\bsr$. The
following facts $\norm{\bsr}=1$,
$\norm{\bsr\star\bsr}=\norm{\bsr}^2=1$, and
$\Inner{\bsr}{\bsr\star\bsr}=1$ indicate that $\bsr=\bsr\star\bsr$.
Therefore $\Omega^{\mathrm{ext}}_3 =\Set{\bsr\in\bbR^8:
\norm{\bsr}=1, \bsr\star\bsr=\bsr}$.
\end{proof}

\begin{remark}
There is another characterization of $\Omega_3$, which is described
by
\begin{eqnarray}
\Omega_3=\Set{\bsr\in\bbR^8:\norm{\bsr}\leqslant\tfrac12}\cup
\Set{\bsr\in\bbR^8:\norm{\bsr}\in\Br{\tfrac12,1},\arccos\Pa{\tfrac{\Inner{\bsr}{\bsr\star\bsr}}{\norm{\bsr}^3}}+3\arccos\Pa{\tfrac1{2\norm{\bsr}}}\leqslant\pi}.
\end{eqnarray}
Indeed, for a $3\times 3$ generic Hermitian matrix of fixed-trace
one $\rho(\bsr)=\frac13(\I+\sqrt{3}\bsr\cdot\bG)$, where
$\bsr\neq\zero$, its all eigenvalues are given by
\begin{eqnarray*}
\lambda_1(\rho(\bsr))&=& \frac13+\frac23\cos\theta,\\
\lambda_2(\rho(\bsr))&=& \frac13+\frac23\cos\Pa{\theta-\frac{2\pi}3},\\
\lambda_3(\rho(\bsr))&=&
\frac13+\frac23\cos\Pa{\theta+\frac{2\pi}3},
\end{eqnarray*}
where
$\theta=\theta(\bsr)=\frac13\arccos\Pa{\norm{\bsr}^{-3}\Inner{\bsr}{\bsr\star\bsr}}$.
Note that
$$
\norm{\bsr}^{-3}\Inner{\bsr}{\bsr\star\bsr}\in[-1,1]\quad\text{and}\quad
\arccos(\norm{\bsr}^{-3}\Inner{\bsr}{\bsr\star\bsr})\in[0,\pi].
$$
Then $\theta\in[0,\tfrac\pi3]$. Based on this observation, we see
that
$\lambda_1(\rho(\bsr))\geqslant\lambda_2(\rho(\bsr))\geqslant\lambda_3(\rho(\bsr))$.
In order to get the positivity of $\rho(\bsr)$, it suffices to
characterize $\lambda_3(\rho(\bsr))\geqslant0$. That is,
$2\norm{\bsr}\cos(\theta+\tfrac{2\pi}3)\geqslant-1$. Clearly if
$\norm{\bsr}\leqslant\tfrac12$, then
$\lambda_3(\rho(\bsr))\geqslant0$. We can assume that
$\norm{\bsr}\geqslant\tfrac12$, then
$$
\cos\Pa{\theta-\frac\pi3}=\cos\Pa{\frac\pi3-\theta}\leqslant\frac1{2\norm{\bsr}}\leqslant1.
$$
In summary, we obtain that
\begin{eqnarray*}
\frac\pi3-\theta\geqslant\arccos\Pa{\frac1{2\norm{\bsr}}}\Longleftrightarrow\arccos\Pa{\tfrac{\Inner{\bsr}{\bsr\star\bsr}}{\norm{\bsr}^3}}+3\arccos\Pa{\tfrac1{2\norm{\bsr}}}\leqslant\pi.
\end{eqnarray*}
In what follows, we show that
\begin{eqnarray}
&&\Set{\bsr\in\bbR^8:\norm{\bsr}\leqslant\tfrac12}\cup
\Set{\bsr\in\bbR^8:\norm{\bsr}\in\Br{\tfrac12,1},\arccos\Pa{\tfrac{\Inner{\bsr}{\bsr\star\bsr}}{\norm{\bsr}^3}}+3\arccos\Pa{\tfrac1{2\norm{\bsr}}}\leqslant\pi}\notag\\
&&=\Set{\bsr\in\bbR^8:\norm{\bsr}\leqslant1,1+2\Inner{\bsr}{\bsr\star\bsr}\geqslant3\Inner{\bsr}{\bsr}}.
\end{eqnarray}
Because
$\abs{2\norm{\bsr}\cos(\theta\pm\tfrac{2\pi}3)}\leqslant2\norm{\bsr}$,
which means that
$2\norm{\bsr}\cos(\theta\pm\tfrac{2\pi}3)\geqslant-2\norm{\bsr}$. Then
$$
\lambda_3(\rho)\geqslant\frac{1-2\abs{\bsr}}3\geqslant0
\quad\text{if }\norm{\bsr}\leqslant\frac12.
$$
In what follows, we assume that $\norm{\bsr}\geqslant\frac12$. Since
$\Tr{\rho(\bsr)^2}=\frac{1+2\norm{\bsr}^2}3$ and $\rho(\bsr)$ is a
legal state iff $\bsr\in\Omega_3$, it follows that
$\Tr{\rho(\bsr)^2}\leqslant1$, i.e., $\norm{\bsr}\leqslant1$.

With the above assumption, $\frac12\leqslant\norm{\bsr}\leqslant1$,
and thus $\frac1{2\norm{\bsr}}\leqslant1$. Let
$x=\frac{\Inner{\bsr}{\bsr\star\bsr}}{\norm{\bsr}^3}\in[-1,1]$ and
$y=\frac1{2\norm{\bsr}}\in[\tfrac12,1]$. Clearly we always have
$\arccos y\in[0,\frac\pi3]$. We separate it into two cases:
\begin{enumerate}[(i)]
\item If $x\geqslant0$, then $0\leqslant\arccos x\leqslant\frac\pi2$. Thus $\arccos x+3\arccos y\leqslant
\pi$ will give rise to $0\leqslant\arccos
x\leqslant\min\Set{\pi-3\arccos y,\frac\pi2}$, i.e.,
$$
x\geqslant \max\Set{-\cos(3\arccos y),0}=\max\Set{3y-4y^3,0}.
$$
\item If $x\leqslant0$, then $\frac\pi2\leqslant\arccos x\leqslant\pi$. Thus $\arccos x+3\arccos y\leqslant
\pi$ will give rise to $3\arccos y\leqslant \pi-\arccos
x\leqslant\frac\pi2$, i.e.,
$$
\cos(3\arccos y)=4y^3-3y\geqslant -x.
$$
\end{enumerate}
In summary, we always have $x\geqslant 3y-4y^3$. that is
$$
\tfrac{\Inner{\bsr}{\bsr\star\bsr}}{\abs{\bsr}^3}\geqslant
3\tfrac1{2\abs{\bsr}}-4\Pa{\tfrac1{2\abs{\bsr}}}^3\Longleftrightarrow1+2\Inner{\bsr}{\bsr\star\bsr}\geqslant
3\Inner{\bsr}{\bsr}.
$$
In fact, if $\norm{\bsr}\leqslant\frac12$, then
$$
\abs{\Inner{\bsr}{\bsr\star\bsr}}\leqslant\norm{\bsr}^3\leqslant\frac18,
$$
implying that $\Inner{\bsr}{\bsr\star\bsr}\geqslant-\frac18$. Thus
\begin{eqnarray*}
1+2\Inner{\bsr}{\bsr\star\bsr}\geqslant 1-\frac14=\frac34 =
3\Pa{\frac12}^2\geqslant 3\norm{\bsr}^2=3\Inner{\bsr}{\bsr}.
\end{eqnarray*}
\end{remark}

\subsection{Parametrization of $\Omega^{\text{ext}}_3$}

As a submanifold of the manifold $\Omega_3$, $\Omega^{\text{ext}}_3$
is $4$-dimensional. Now for
$\bsr=(r_1,\ldots,r_8)\in\Omega^{\text{ext}}_3$, we see that
$$
\bsr\star\bsr=\bsr\quad\text{and}\quad\norm{\bsr}=1,
$$
that is,
\begin{eqnarray*}
\Innerm{\bsr}{\bsD_k}{\bsr}&=&r_k,\quad k=1,\ldots,8;\\
\sum^8_{k=1}r^2_k&=&1.
\end{eqnarray*}
The above constraints are reduced into the following forms:
\begin{eqnarray}
0&=&\sqrt{3}(r_4r_6+r_5r_7)+r_1(2r_8-1),\label{eq:1}\\
0&=&\sqrt{3}(r_5r_6-r_4r_7)+r_2(2r_8-1),\label{eq:2}\\
0&=&\sqrt{3}(r^2_4+r^2_5-r^2_6-r^2_7)+2r_3(2r_8-1),\label{eq:3}\\
0&=&\sqrt{3}(r_1r_6-r_2r_7)+r_4(\sqrt{3}r_3-r_8-1),\notag\\
0&=&\sqrt{3}(r_2r_6+r_1r_7)+r_5(\sqrt{3}r_3-r_8-1),\notag\\
0&=&\sqrt{3}(r_1r_4+r_2r_5)-r_6(\sqrt{3}r_3+r_8+1),\notag\\
0&=&\sqrt{3}(r_2r_4-r_1r_5)+r_7(\sqrt{3}r_3+r_8+1),\notag\\
0&=&2(r^2_1+r^2_2+r^2_3) - (r^2_4+r^2_5+r^2_6+r^2_7) -2r_8(r_8+1),\label{eq:4}\\
1&=&r^2_1+r^2_2+r^2_3+r^2_4+r^2_5+r^2_6+r^2_7+r^2_8.\label{eq:5}
\end{eqnarray}
Let $R^2=r^2_4+r^2_5+r^2_6+r^2_7\in[0,1]$, where $R\geqslant0$. Then
$r^2_1+r^2_2+r^2_3+r^2_8=1-R^2$ by Eq.~\eqref{eq:5}. From
Eq.~\eqref{eq:4}, it follows that
$2(1-R^2-r^2_8)-R^2-2r_8(r_8+1)=0$, i.e.,
\begin{eqnarray*}
4r^2_8+2r_8+(3R^2-2)=0.
\end{eqnarray*}
As the quadratic equation of argument $r_8$, its roots must be real.
This indicates that its discriminant $\Delta=2^2-4\cdot
4(3R^2-2)\geqslant0$. That is, $R^2\leqslant\frac34$. At this time,
the solution is given by
\begin{eqnarray}
r_8(\epsilon)=\frac14\Pa{-1+\epsilon\sqrt{3(3-4R^2)}},\quad \forall
R\in[0, \tfrac{\sqrt{3}}2]\text{ and }\epsilon\in\set{\pm1}.
\end{eqnarray}
\begin{itemize}
\item If $R=0$, i.e., $(r_4,r_5,r_6,r_7)=(0,0,0,0)$, then
$r_8(\epsilon)=\frac{3\epsilon-1}4$. Based on this, we have
\begin{eqnarray*}
0&=& r_1(2r_8-1) = r_2(2r_8-1) = r_3(2r_8-1)\\
1&=& r^2_1+r^2_2+r^2_3+r^2_8
\end{eqnarray*}
If $\epsilon=-1$, $r_8(\epsilon)=-1$, and $(r_1,r_2,r_3)=(0,0,0)$;
if $\epsilon=1$, $r_8(\epsilon)=\frac12$, and
$r^2_1+r^2_2+r^2_3=\frac34$.
\item If $R\in (0,\tfrac{\sqrt{3}}2]$, then $2r_8(\epsilon)-1<0$. Solving this
group of equations, we get that
\begin{eqnarray}
r_1(\epsilon) &=& \frac{\sqrt{3}+\epsilon\sqrt{3-4R^2}}{2R^2}(r_4
r_6+r_5
r_7),\\
r_2(\epsilon) &=& \frac{\sqrt{3}+\epsilon\sqrt{3-4R^2}}{2R^2}(r_5
r_6-r_4
r_7),\\
r_3(\epsilon) &=& \frac{\sqrt{3}+\epsilon\sqrt{3-4R^2}}{4R^2}(r^2_4
+ r^2_5-r^2_6-
r^2_7),\\
r_8(\epsilon) &=& \frac{-1+\epsilon\sqrt{3(3-4R^2)}}4.
\end{eqnarray}
\end{itemize}
Thus, as the free variables, $(r_4,r_5,r_6,r_7)$ is the suitable
choice for the parametrization of $\Omega^{\text{ext}}_3$.

In summary, for a generic $\bsr\in \Omega^{\text{ext}}_3$, it can be
classified into the following three categories $\cI_1,\cI_2$, and
$\cI_3$:
\begin{eqnarray}
\cI_1&=&\set{\Pa{0,0,0,0,0,0,0,-1}},\label{eq:I1}\\
\cI_2&=&\Set{\Pa{r_1,r_2,r_3,0,0,0,0,\tfrac12}: r^2_1+r^2_2+r^2_3=\frac34},\label{eq:I2}\\
\cI_3&=&\Set{\Pa{r_1(\epsilon),r_2(\epsilon),r_3(\epsilon),r_4,r_5,r_6,r_7,r_8(\epsilon)}:
0<\sum^7_{i=4}r^2_i\leqslant \frac34 \text{ for
}\epsilon\in\set{\pm1}}.\label{eq:I3}
\end{eqnarray}
That is,
\begin{eqnarray}\label{eq:partition}
\Omega^{\text{ext}}_3=\cI_1\cup\cI_2\cup\cI_3.
\end{eqnarray}
For a convenient usage, denote
\begin{eqnarray}
\cJ_R(\epsilon)=\Set{\bsr(\epsilon)\in\cI_3:\sum^7_{i=4}r^2_i=R^2}
\end{eqnarray}
for $R\in(0,\tfrac{\sqrt{3}}2]$ and $\epsilon\in\set{\pm1}$. Let
$\cJ_R=\cup_{\epsilon\in\set{\pm1}}\cJ_R(\epsilon)$. Then
\begin{eqnarray}
\cI_3=\bigcup_{R\in(0,\sqrt{3}/2]}\cJ_R.
\end{eqnarray}

\subsection{Examples with analytical computations}

\begin{exam}
For given two qutrit observables
$$
\bsA=\Pa{\begin{array}{ccc}
                                      -1 & 0 & 0 \\
                                      0 & 0 & 0 \\
                                      0 & 0 & 1
                                    \end{array}
}=-\frac12\bsG_3-\frac{\sqrt{3}}2\bsG_8\quad\text{and}\quad\bsB=\Pa{\begin{array}{ccc}
                   0 & 0 & 0 \\
                   0 & 0 & \mathrm{i} \\
                   0 & -\mathrm{i} & 0
                 \end{array}
}=-\bsG_7,
$$
we have that
$$
\min_{\rho\in\rD(\bbC^3)}[\var_\rho(\bsA)+\var_\rho(\bsB)]=0.
$$
Indeed, we rewrite $\bsA=a_0\I+\bsa\cdot\bG$ and
$\bsB=b_0\I+\bsb\cdot\bG$, where $a_0=b_0=0$ and
\begin{eqnarray*}
\bsa =\Pa{0,0,-\tfrac12,0,0,0,0,-\tfrac{\sqrt{3}}2},\quad \bsb =
\Pa{0,0,0,0,0,0,-1,0}.
\end{eqnarray*}
Then $\bsT_{\bsa,\bsb} = -\bsD_8 - 2\bsO_{\bsa,\bsb}$, which is
equal to
\begin{eqnarray}
\bsT_{\bsa,\bsb} =\Pa{
\begin{array}{cccccccc}
 -1 & 0 & 0 & 0 & 0 & 0 & 0 & 0 \\
 0 & -1 & 0 & 0 & 0 & 0 & 0 & 0 \\
 0 & 0 & -\frac{3}{2} & 0 & 0 & 0 & 0 & -\frac{\sqrt{3}}{2} \\
 0 & 0 & 0 & \frac{1}{2} & 0 & 0 & 0 & 0 \\
 0 & 0 & 0 & 0 & \frac{1}{2} & 0 & 0 & 0 \\
 0 & 0 & 0 & 0 & 0 & \frac{1}{2} & 0 & 0 \\
 0 & 0 & 0 & 0 & 0 & 0 & -\frac{3}{2} & 0 \\
 0 & 0 & -\frac{\sqrt{3}}{2} & 0 & 0 & 0 & 0 & -\frac{1}{2} \\
\end{array}
}.
\end{eqnarray}
Now
\begin{eqnarray}
\Innerm{\bsr}{\bsT_{\bsa,\bsb}}{\bsr} =\frac12\Pa{-2 r_1^2-2 r_2^2-3
r_3^2+r_4^2+r_5^2+r_6^2-3 r_7^2-r_8^2-2 \sqrt{3} r_3 r_8}.
\end{eqnarray}
\begin{itemize}
\item $\min_{\bsr\in\cI_1}\Innerm{\bsr}{\bsT_{\bsa,\bsb}}{\bsr}
=-\frac12$.
\item For $\bsr\in\cI_2$, we see that $\Innerm{\bsr}{\bsT_{\bsa,\bsb}}{\bsr} =\frac{-8(r^2_1+r^2_2+r^2_3)-4r_3^2-4 \sqrt{3} r_3-1}8=\frac{-4r^2_3-4\sqrt{3}r_3-7}8$. By using spherical coordinates, we let
$r_3=\frac{\sqrt{3}}2\cos\theta$, then we have
$\min_{\bsr\in\cI_2}\Innerm{\bsr}{\bsT_{\bsa,\bsb}}{\bsr} = -2$.
\item For $\bsr\in \cJ_R:=\cup_{\epsilon\in\set{\pm1}}\cJ_R(\epsilon)$, we see that
$\sum^3_{k=1}r^2_k(\epsilon)+r^2_8(\epsilon)=1-R^2$, where
$\sum^7_{k=1}r^2_k=R^2$, then
\begin{eqnarray*}
\Innerm{\bsr}{\bsT_{\bsa,\bsb}}{\bsr} = \frac12\Pa{-r_3^2-4
r_7^2+r_8^2-2 \sqrt{3} r_3 r_8+3 R^2-2}.
\end{eqnarray*}
We construct Lagrange multiplier function as follows:
$$
L(r_4,r_5,r_6,r_7,\lambda)
=\Innerm{\bsr}{\bsT_{\bsa,\bsb}}{\bsr}+\lambda\Pa{\sum^7_{k=4}
r^2_k-R^2}.
$$
Then $\frac{\partial L}{\partial r_4}=\frac{\partial L}{\partial
r_5}=0$ means that $r_4=r_5=0$; and $\frac{\partial L}{\partial
r_6}=\frac{\partial L}{\partial r_7}=0$ means that
$$
\lambda=3,r_6=0, r_7=\pm R\quad\text{or}\quad \lambda=1,r_6=\pm
R,r_7=0.
$$
For $\lambda=3$ and $(r_4,r_5,r_6,r_7)=(0,0,0,\pm R)$ and
$$
(r_1(\epsilon),r_2(\epsilon),r_3(\epsilon),r_8(\epsilon))=\Pa{0,0,\frac{-\sqrt{3}-\epsilon\sqrt{3-4R^2}}4,\frac{-1+\epsilon\sqrt{3(3-4R^2)}}4},
$$
and thus
\begin{eqnarray*}
\Innerm{\bsr(\epsilon)}{\bsT_{\bsa,\bsb}}{\bsr(\epsilon)} =-
\frac{3R^2+1}2\quad\forall \epsilon\in\set{\pm1}.
\end{eqnarray*}
We get that
\begin{eqnarray*}
\min_{R\in(0,\sqrt{3}/2]}\min_{\bsr\in\cJ_R}\Innerm{\bsr}{\bsT_{\bsa,\bsb}}{\bsr}=
- \max_{R\in(0,\sqrt{3}/2]}\frac{3R^2+1}2 = -\frac{13}8
\end{eqnarray*}
which is attained at $R=\frac{\sqrt{3}}2$. That is,
$\min_{\bsr\in\cI_3}\Innerm{\bsr}{\bsT_{\bsa,\bsb}}{\bsr}=-\frac{13}8$.
\end{itemize}
Therefore
\begin{eqnarray*}
\ell_{\bsa,\bsb} &=&
\min\Set{\min_{\bsr\in\cI_1}\Innerm{\bsr}{\bsT_{\bsa,\bsb}}{\bsr},\min_{\bsr\in\cI_2}\Innerm{\bsr}{\bsT_{\bsa,\bsb}}{\bsr},\min_{\bsr\in\cI_3}\Innerm{\bsr}{\bsT_{\bsa,\bsb}}{\bsr}}\\
&=& \min\Set{-\frac12,-2,-\frac{13}8} = -2
\end{eqnarray*}
Now $m_{\bsA,\bsB}=\frac23(1+1-2)=0$ is attained at the qutrit state
\begin{eqnarray*}
\rho(\bsr)=\frac13(\I+\sqrt{3}\bsr\cdot\bG)=\Pa{
\begin{array}{ccc}
 1 & 0 & 0 \\
 0 & 0 & 0 \\
 0 & 0 & 0 \\
\end{array}
},
\end{eqnarray*}
where
$$
\bsr = \Pa{0,0,\frac{\sqrt{3}}2,0,0,0,0,\frac12}.
$$
\end{exam}

\begin{exam}\label{exam:gigj}
For given two qutrit observables
$$
\bsA=\bsG_4 \quad\text{and}\quad \bsB=\bsG_6,
$$
we have that \cite{Kurzynski2016}
$$
\min_{\rho\in\rD(\bbC^3)}[\var_\rho(\bsA)+\var_\rho(\bsB)]=\frac7{16}.
$$
Indeed, we rewrite $\bsA=\bsa\cdot\bG$ and $\bsB=\bsb\cdot\bG$,
where
\begin{eqnarray*}
\bsa =\Pa{0,0,0,1,0,0,0,0},\quad \bsb = \Pa{0,0,0,0,0,1,0,0}.
\end{eqnarray*}
Then
\begin{eqnarray*}
\bsT_{\bsa,\bsb}=-\bsD_8-2\bsO_{\bsa,\bsb}
=\diag\Pa{-1,-1,-1,-\frac32,\frac12,-\frac32,\frac12,1}.
\end{eqnarray*}
\begin{itemize}
\item $\min_{\bsr\in\cI_1}\Innerm{\bsr}{\bsT_{\bsa,\bsb}}{\bsr}=1$
\item $\min_{\bsr\in\cI_2}\Innerm{\bsr}{\bsT_{\bsa,\bsb}}{\bsr}=-\frac12$. Indeed, for $\bsr\in\cI_2$, we see that
$\Innerm{\bsr}{\bsT_{\bsa,\bsb}}{\bsr}=\frac14-(r^2_1+r^2_2+r^2_3)=-\frac12$.
\item
$\min_{\bsr\in\cI_3}\Innerm{\bsr}{\bsT_{\bsa,\bsb}}{\bsr}=-\frac{43}{32}$.
In fact, for $\bsr\in\cJ_R$,
\begin{eqnarray*}
\Innerm{\bsr}{\bsT_{\bsa,\bsb}}{\bsr}&=&\frac12\Br{4\Pa{r^2_8-r^2_4-r^2_6}-2\Pa{r^2_1+r^2_2+r^2_3+r^2_8}+\Pa{r^2_4+r^2_5+r^2_6+r^2_7}}\\
&=&2\Pa{r^2_8-r^2_4-r^2_6}+\frac{-2(1-R^2)+R^2}2 \\
&=& 2\Pa{r^2_8-r^2_4-r^2_6}+\Pa{\frac32R^2-1}.
\end{eqnarray*}
Note that $r_8=\frac{-1+\epsilon\sqrt{3(3-4R^2)}}4$. We have
$\Innerm{\bsr}{\bsT_{\bsa,\bsb}}{\bsr}=
-2\Pa{r^2_4+r^2_6}+\frac{1-\epsilon\sqrt{3(3-4R^2)}}4$, and thus
\begin{eqnarray*}
\min_{\bsr\in\cJ_R(\epsilon)}\Innerm{\bsr}{\bsT_{\bsa,\bsb}}{\bsr}&=&
-2\max_{\bsr\in\cJ_R(\epsilon)}\Pa{r^2_4+r^2_6}+\frac{1-\epsilon\sqrt{3(3-4R^2)}}4\\
&=&-2R^2+\frac{1-\epsilon\sqrt{3(3-4R^2)}}4,
\end{eqnarray*}
where in the last equality, we used the fact that
$r^2_4+r^2_6+r^2_5+r^2_7=R^2$ and
$\max_{\bsr\in\cJ_R(\epsilon)}\Pa{r^2_4+r^2_6}=R^2$ only if
$(r_5,r_7)=(0,0)$. Then
\begin{eqnarray*}
&&\min_{\bsr\in\cI_3}\Innerm{\bsr}{\bsT_{\bsa,\bsb}}{\bsr}=\min_{R\in(0,\sqrt{3}/2]}\min_{\bsr\in\cJ_R}\Innerm{\bsr}{\bsT_{\bsa,\bsb}}{\bsr}\\
&&=\min_{R\in(0,\sqrt{3}/2]}\Br{-2R^2+\frac{1-\sqrt{3(3-4R^2)}}4}=-\frac{43}{32},
\end{eqnarray*}
which is attained at $R=\frac{3\sqrt{5}}{8}$.
\end{itemize}
From the above discussion, we get that
\begin{eqnarray*}
\ell_{\bsa,\bsb}&=&
\min\Set{\min_{\bsr\in\cI_1}\Innerm{\bsr}{\bsT_{\bsa,\bsb}}{\bsr},\min_{\bsr\in\cI_2}\Innerm{\bsr}{\bsT_{\bsa,\bsb}}{\bsr},\min_{\bsr\in\cI_3}\Innerm{\bsr}{\bsT_{\bsa,\bsb}}{\bsr}}
\\
&=&\min\Set{1,-\frac12,-\frac{43}{32}}=-\frac{43}{32},
\end{eqnarray*}
from which we obtain that
$$
m_{\bsA,\bsB} =
\frac23\Pa{\abs{\bsa}^2+\abs{\bsb}^2+\ell_{\bsa,\bsb}} = \frac7{16}.
$$
In this situation, we present the specific form of qutrit state
saturating the lower bound in the additive uncertainty relation
$\var_\rho(\bsG_4)+\var_\rho(\bsG_6)\geqslant\frac7{16}$:
\begin{eqnarray}
\rho(\bsr) = \frac13(\I+\sqrt{3}\bsr\cdot\bG)=\frac1{16}\Pa{\begin{array}{ccc}
                                                              5(1+2\cos(2t)) & 5\sin(2t) & 2\sqrt{15}\cos t \\
                                                              5\sin(2t) & 5(1-2\cos(2t)) & 2\sqrt{15}\sin t \\
                                                              2\sqrt{15}\cos t & 2\sqrt{15}\sin t &
                                                              6
                                                            \end{array}
},
\end{eqnarray}
where
$$
\bsr=\Pa{\frac{5\sqrt{3}}{16} \sin(2t),0,\frac{5\sqrt{3}}{8}\cos(2
t),\frac{3\sqrt{5}}{8}\cos t,0,\frac{3\sqrt{5}}{8}\sin
t,0,-\frac{1}{16}}.
$$
\end{exam}

\begin{exam}
Choose the angular momentum operators
\cite{Gao2021,Toth2022,Chiew2022}
$$
\bsL_x = \frac1{\sqrt{2}}\Pa{\begin{array}{ccc}
                               0 & 1 & 0 \\
                               1 & 0 & 1 \\
                               0 & 1 & 0
                             \end{array}
},\quad \bsL_y =\frac1{\sqrt{2}}\Pa{\begin{array}{ccc}
                               0 & -\mathrm{i} & 0 \\
                               \mathrm{i} & 0 & -\mathrm{i} \\
                               0 & \mathrm{i} & 0
                             \end{array}
},\quad \bsL_z=\Pa{\begin{array}{ccc}
                     1 & 0 & 0 \\
                     0 & 0 & 0 \\
                     0 & 0 & -1
                   \end{array}
}.
$$
Note that
$\bsL_x=\frac{\bsG_1+\bsG_6}{\sqrt{2}},\bsL_y=\frac{\bsG_2+\bsG_7}{\sqrt{2}}$,
and $\bsL_z=\frac{\bsG_3+\sqrt{3}\bsG_8}2$. We have that
\begin{eqnarray*}
\min_{\rho\in\density{\bbC^3}}\Br{\var_\rho(\bsL_x)+\var_\rho(\bsL_y)+\var_\rho(\bsL_z)}
=1.
\end{eqnarray*}
Indeed, we can rewrite them as
$$
\bsL_x=\bsa_x\cdot\bG,\quad \bsL_y = \bsa_y\cdot\bG,\quad
\bsL_z=\bsa_z\cdot\bG,
$$
where
\begin{eqnarray*}
\bsa_x &=& \Pa{\frac1{\sqrt{2}},0,0,0,0,\frac1{\sqrt{2}},0, 0},\\
\bsa_y &=& \Pa{0,\frac1{\sqrt{2}},0,0,0,0,\frac1{\sqrt{2}},0},\\
\bsa_z &=& \Pa{0,0,\frac12,0, 0, 0, 0, \frac{\sqrt{3}}2}.
\end{eqnarray*}
Then $\bsT_{x,y,z}=-2(\proj{\bsa_x}+\proj{\bsa_y}+\proj{\bsa_z})$,
that is,
\begin{eqnarray*}
\bsT_{x,y,z}=\Pa{
\begin{array}{cccccccc}
 -1 & 0 & 0 & 0 & 0 & -1 & 0 & 0 \\
 0 & -1 & 0 & 0 & 0 & 0 & -1 & 0 \\
 0 & 0 & -\frac{1}{2} & 0 & 0 & 0 & 0 & -\frac{\sqrt{3}}{2} \\
 0 & 0 & 0 & 0 & 0 & 0 & 0 & 0 \\
 0 & 0 & 0 & 0 & 0 & 0 & 0 & 0 \\
 -1 & 0 & 0 & 0 & 0 & -1 & 0 & 0 \\
 0 & -1 & 0 & 0 & 0 & 0 & -1 & 0 \\
 0 & 0 & -\frac{\sqrt{3}}{2} & 0 & 0 & 0 & 0 & -\frac{3}{2} \\
\end{array}
}.
\end{eqnarray*}
Now
\begin{eqnarray}
\Innerm{\bsr}{\bsT_{x,y,z}}{\bsr}
=-\frac12\Br{2(r_1+r_6)^2+2(r_2+r_7)^2+(r_3+\sqrt{3}r_8)^2}.
\end{eqnarray}
Thus
\begin{itemize}
\item
$\min_{\bsr\in\cI_1}\Innerm{\bsr}{\bsT_{x,y,z}}{\bsr}=-\frac32$.
\item $\min_{\bsr\in\cI_2}\Innerm{\bsr}{\bsT_{x,y,z}}{\bsr}=-\frac32$.
Indeed, for $\bsr\in\cI_2$,
$\Innerm{\bsr}{\bsT_{x,y,z}}{\bsr}=-\Pa{r^2_1+r^2_2+r^2_3}+\frac{4r^2_3-4\sqrt{3}r_3-3}8=\frac{4r^2_3-4\sqrt{3}r_3-9}8$.
Again, using spherical coordinate, let
$r_3=\frac{\sqrt{3}}2\cos\theta$, and we see that
$$
\min_{\bsr\in\cI_2}\Innerm{\bsr}{\bsT_{x,y,z}}{\bsr}=-\frac32,
$$
which is attained at $r_3=\frac{\sqrt{3}}2$.
\item Let $\alpha=\frac{\sqrt{3}+\sqrt{3-4 R^2}\epsilon}{4R^2}$ and $\beta=\frac{-1+\epsilon\sqrt{3(3-4 R^2)}}4$. For $\bsr\in\cI_3$,
\begin{eqnarray*}
\Innerm{\bsr}{\bsT_{x,y,z}}{\bsr}&=&\frac{r_3^2}{2}-\sqrt{3} r_8
r_3-r_6^2-r_7^2-2 r_1 r_6-2 r_2 r_7-\frac{r_8^2}{2} -
\Pa{r^2_1+r^2_2+r^2_3+r^2_8}\\
&=&\frac{r_3^2}{2}-\sqrt{3} r_8 r_3-r_6^2-r_7^2-2 r_1 r_6-2 r_2
r_7-\frac{r_8^2}{2}+R^2-1.
\end{eqnarray*}
For the Lagrange multiplier function
$L=\Innerm{\bsr}{\bsT_{x,y,z}}{\bsr}-\lambda\Pa{\sum^7_{k=4}r^2_k-R^2}$,
the vanishing gradient $\nabla L=\zero$ means that
\begin{eqnarray*}
2\lambda r_4 &=& -2\sqrt{3}\alpha\beta r_4-4\alpha r^2_6+4\alpha
r^2_7+2\alpha^2r_4(R^2-2 r_6^2-2 r_7^2),\\
2\lambda r_5 &=& 4\alpha^2r_4r_5,\\
2\lambda r_6 &=& 0,\\
2\lambda r_7 &=& 0.
\end{eqnarray*}
\begin{itemize}
\item If $\lambda\neq0$, then $r_6=r_7=0$. Thus $(2\lambda-1)r_4=0$
and $(\lambda-2\alpha^2r_4)r_5=0$. So $r_4\neq0$ by
$r^2_4+r^2_5=R^2$. Otherwise $r_4=0$ leads $r_5=0$, a contradiction.
Then $\lambda=\frac12$. If $r_5\neq0$, then $r_4=\frac1{4\alpha^2}$
and $r^2_5=R^2-r^2_4=R^2-\frac1{16\alpha^4}$; if $r_5=0$, then
$r_4=\pm R$. Thus $\Innerm{\bsr}{\bsT_{x,y,z}}{\bsr}=-\frac{\beta
^2}{2}+\frac{1}{2}\alpha^2(r_4^2+r_5^2)^2-\sqrt{3}\alpha\beta
(r_4^2+r_5^2)+R^2-1$, and
\begin{eqnarray*}
\inf_{\bsr\in\cI_3}\Innerm{\bsr}{\bsT_{x,y,z}}{\bsr} =
\inf_{0<R\leqslant \sqrt{3}/2}\frac{4R^2-3}2=-\frac32.
\end{eqnarray*}
\item If $\lambda=0$, then $r_4r_5=0$. For $r_4=0$, then it must
have $r_6=r_7=0$, thus $r_5=\pm R$. For $r_5=0$,
\begin{eqnarray*}
r_4=\frac{(\sqrt{3}-\epsilon\sqrt{3-4R^2})(r^2_6-r^2_7)}{\Pa{\frac{\sqrt{3}-\epsilon\sqrt{3-4R^2}}2}^2-(r^2_6+r^2_7)}
= \frac{\frac1\alpha(r^2_6-r^2_7)}{\frac1{4\alpha^2}-(r^2_6+r^2_7)},
\end{eqnarray*}
together with $r^2_6+r^2_7=R^2-r^2_4$, we get that
$$
r^2_6-r^2_7=\alpha r_4\Pa{r_4^2+\frac{1}{4\alpha^2}-R^2}
$$
implying that
\begin{eqnarray*}
r^2_6 &=& \frac{4\alpha^2 r^3_4-4\alpha r^2_4-4\alpha^2R^2 r_4+r_4+4\alpha R^2}{8\alpha},\\
r^2_7 &=& \frac{-4\alpha^2 r^3_4-4\alpha r^2_4+4\alpha^2R^2
r_4-r_4+4\alpha R^2}{8\alpha}.
\end{eqnarray*}
\end{itemize}
Then
\begin{eqnarray*}
\min_{\bsr\in\cJ_R}\Innerm{\bsr}{\bsT_{x,y,z}}{\bsr}
&=&\min_{r_4\in[-R,R]} \frac{4r^2_4R^4+r^4_4 (-\sqrt{9-12
R^2}\epsilon +2R^2-3)+R^4(\sqrt{9-12 R^2}\epsilon
-2R^2-3)}{4R^4}\\
&=&-R^2,
\end{eqnarray*}
if $r^2_4=\frac{2R^4}{3-2R^2+\epsilon\sqrt{3(3-4R^2)}}$, which leads
to the following result
$$
\min_{\bsr\in\cI_3}\Innerm{\bsr}{\bsT_{x,y,z}}{\bsr}=-\max_{0<R\leqslant\sqrt{3}/2}R^2
= -\frac34.
$$
\end{itemize}
Therefore $\ell_{x,y,z}=\min\set{-\frac32,-\frac34}=-\frac32$. Then
$m_{\bsL_x,\bsL_y,\bsL_z}=\frac23\Pa{1+1+1-\frac32}=1$.
\end{exam}

\begin{prop}
For Gell-Mann matrices, denote
\begin{eqnarray}
m_{ij}:=\min_{\rho\in\density{\bbC^3}}\Br{\var_\rho(\bsG_i) +
\var_\rho(\bsG_j)},\quad (1\leqslant i<j\leqslant8).
\end{eqnarray}
It holds that all non-vanishing ones of $\binom{8}{2}=28$ constants
are given below:
\begin{eqnarray*}
m_{14} =m_{15} =m_{16} =m_{17} =m_{24} =m_{25} =m_{26}
=m_{27}=m_{46} =m_{47} =m_{56} =m_{57} =\frac7{16}.
\end{eqnarray*}
\end{prop}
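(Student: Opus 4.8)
The plan is to reduce every non-vanishing case to the single explicit computation already carried out in Example~\ref{exam:gigj}, using the unitary covariance of the variance as the organizing principle. \emph{Step 1 (covariance).} I would first record that $m_{ij}$ is unchanged under conjugating the pair $(\bsG_i,\bsG_j)$ by a unitary and under changing the sign of either generator. Indeed $\var_\rho(-\bsX)=\var_\rho(\bsX)$, and for any unitary $\bsU$ on $\bbC^3$ one has $\var_{\bsU\rho\bsU^\dagger}(\bsU\bsX\bsU^\dagger)=\var_\rho(\bsX)$; since $\rho\mapsto\bsU\rho\bsU^\dagger$ is a bijection of $\density{\bbC^3}$, taking the minimum gives $m_{\bsU\bsG_i\bsU^\dagger,\,\bsU\bsG_j\bsU^\dagger}=m_{ij}$. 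Hence it suffices to classify the $\binom{8}{2}=28$ unordered pairs $\{\bsG_i,\bsG_j\}$ modulo conjugation by $\unitary{3}$ and modulo sign changes of each generator.

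\emph{Step 2 (the sixteen vanishing pairs).} For each pair not in the list I would exhibit a common unit eigenvector $\ket\psi$ of $\bsG_i$ and $\bsG_j$; then $\var_\psi(\bsG_i)+\var_\psi(\bsG_j)=0$, and since the variance sum is everywhere nonnegative this forces $m_{ij}=0$. The eigenvectors are read off the matrices directly: each $\bsG_k$ with $k\le 7$ annihilates one of the coordinate vectors ($\ket3$ for $k=1,2,3$; $\ket2$ for $k=4,5$; $\ket1$ for $k=6,7$), while every coordinate vector is an eigenvector of both $\bsG_3$ and $\bsG_8$; this disposes of the $13$ pairs that contain $\bsG_3$ or $\bsG_8$. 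The remaining three pairs $\{\bsG_1,\bsG_2\}$, $\{\bsG_4,\bsG_5\}$, $\{\bsG_6,\bsG_7\}$ are each supported on a single $2\times2$ coordinate block and so share the third coordinate vector as a joint null vector.

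\emph{Step 3 (the twelve pairs all equal $7/16$).} The twelve listed pairs are exactly those in which $\bsG_i$ and $\bsG_j$ are both off-diagonal and the two-element index sets supporting them overlap in exactly one index. I would show each such pair is equivalent, in the sense of Step~1, to $\{\bsG_4,\bsG_6\}$, whose blocks are $\{1,3\}$ and $\{2,3\}$: conjugating by a permutation matrix that transposes basis vectors moves the shared index to $3$, turning the pair into one of the form $\{\pm\bsG_4\text{ or }\pm\bsG_5,\ \pm\bsG_6\text{ or }\pm\bsG_7\}$; then conjugating, as needed, by the diagonal phase $\diag(\sqrt{-1},1,1)$ (which fixes $\bsG_6,\bsG_7$ and interchanges $\bsG_4$ with $\pm\bsG_5$) and by $\diag(1,\sqrt{-1},1)$ (which fixes $\bsG_4,\bsG_5$ and interchanges $\bsG_6$ with $\pm\bsG_7$) brings it to $\{\bsG_4,\bsG_6\}$ up to signs. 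Thus all twelve values coincide with $m_{46}$, and Example~\ref{exam:gigj} evaluates this to $7/16$.

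The only part demanding care is Step~3: one must track how each basis transposition and each diagonal phase acts on the individual off-diagonal generators — in particular how it pairs the real-off-diagonal $\bsG_1,\bsG_4,\bsG_6$ with the imaginary-off-diagonal $\bsG_2,\bsG_5,\bsG_7$ — so as to confirm that every one of the twelve pairs is indeed routed to $\{\bsG_4,\bsG_6\}$. Because of Step~1 all sign and overall-phase ambiguities are harmless, so this amounts to a finite verification; Steps~1 and~2 are immediate.
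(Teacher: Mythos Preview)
Your argument is correct and is genuinely different from the paper's approach. The paper's proof consists of the single sentence ``The proof goes similarly for the calculation performed in Example~\ref{exam:gigj},'' i.e.\ one is meant to rerun the explicit $\bsT_{\bsa,\bsb}$-computation over the partition $\Omega^{\mathrm{ext}}_3=\cI_1\cup\cI_2\cup\cI_3$ for each of the $28$ pairs. By contrast, you invoke the unitary covariance $m_{\bsU\bsG_i\bsU^\dagger,\bsU\bsG_j\bsU^\dagger}=m_{ij}$ together with sign invariance to collapse the twelve non-vanishing pairs to the single pair $\{\bsG_4,\bsG_6\}$ already handled in Example~\ref{exam:gigj}, and you dispose of the sixteen vanishing pairs uniformly via common eigenvectors rather than by computation. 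Your route is shorter and more conceptual: it explains \emph{why} the twelve constants coincide (they lie in one $\unitary{3}$-orbit of pairs) rather than merely verifying it. The paper's brute-force repetition, on the other hand, has the advantage of being a direct illustration of the quadratic-programming machinery that the paper is advertising, and it produces the extremal states for each pair explicitly. One small point: your basis vectors $\ket1,\ket2,\ket3$ correspond to the paper's $\ket0,\ket1,\ket2$, but this is purely notational.
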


\begin{proof}
The proof goes similarly for the calculation performed in
Example~\ref{exam:gigj}.
\end{proof}

\begin{exam}
For given two qutrit observables
$$
\bsA=\Pa{\begin{array}{ccc}
                                      -1 & 0 & 0 \\
                                      0 & 0 & 0 \\
                                      0 & 0 & 1
                                    \end{array}
}\quad\text{and}\quad\bsB=\Pa{\begin{array}{ccc}
                   0 & 1 & 0 \\
                   1 & 0 & \mathrm{i} \\
                   0 & -\mathrm{i} & 0
                 \end{array}
},
$$
we have that
$$
\min_{\rho\in\rD(\bbC^3)}[\var_\rho(\bsA)+\var_\rho(\bsB)]=\frac{15}{32}.
$$
Indeed, we rewrite $\bsA=\bsa\cdot\bG$ and $\bsB=\bsb\cdot\bG$ and
\begin{eqnarray*}
\bsa =\Pa{0,0,-\tfrac12,0,0,0,0,-\tfrac{\sqrt{3}}2},\quad \bsb =
\Pa{1,0,0,0,0,0,-1,0}.
\end{eqnarray*}
Then $\bsT_{\bsa,\bsb} = -\sqrt{3}\bsD_5 - 2\bsO_{\bsa,\bsb}$, which
is equal to
\begin{eqnarray*}
\bsT_{\bsa,\bsb} =\Pa{
\begin{array}{cccccccc}
 -2 & 0 & 0 & 0 & 0 & 0 & \frac{1}{2} & 0 \\
 0 & 0 & 0 & 0 & 0 & -\frac{3}{2} & 0 & 0 \\
 0 & 0 & -\frac{1}{2} & 0 & -\frac{3}{2} & 0 & 0 & -\frac{\sqrt{3}}{2} \\
 0 & 0 & 0 & 0 & 0 & 0 & 0 & 0 \\
 0 & 0 & -\frac{3}{2} & 0 & 0 & 0 & 0 & \frac{\sqrt{3}}{2} \\
 0 & -\frac{3}{2} & 0 & 0 & 0 & 0 & 0 & 0 \\
 \frac{1}{2} & 0 & 0 & 0 & 0 & 0 & -2 & 0 \\
 0 & 0 & -\frac{\sqrt{3}}{2} & 0 & \frac{\sqrt{3}}{2} & 0 & 0 & -\frac{3}{2} \\
\end{array}
}.
\end{eqnarray*}
Now the spectrum of $\bsT_{\bsa,\bsb}$ is given by
$\set{-\tfrac52,-2,-\sqrt{3},\Pa{-\tfrac32}_{(2)},0,\tfrac32,\sqrt{3}}$
and
\begin{eqnarray*}
\Innerm{\bsr}{\bsT_{\bsa,\bsb}}{\bsr}=\frac12\Big(-4r^2_1-r^2_3-4
r^2_7-3r^2_8+2r_1r_7-6r_3r_5-6r_2r_6-2\sqrt{3}r_3r_8+2\sqrt{3}r_5r_8\Big).
\end{eqnarray*}
Note
\begin{eqnarray}
\Inner{\bsa\star\bsa+\bsb\star\bsb}{\bsr}-2\Pa{\Inner{\bsa}{\bsr}^2+\Inner{\bsb}{\bsr}^2}
= -\Br{\sqrt{3}r_5+2(r_1-r_7)^2+\frac12(r_3+\sqrt{3}r_8)^2}
\end{eqnarray}
In fact, we can identify
$\ell_{\bsa,\bsb}=\min_{\bsr\in\Omega^{\mathrm{ext}}_3}\Innerm{\bsr}{\bsT_{\bsa,\bsb}}{\bsr}=-\frac{147}{64}$
by using the method from \cite{Szymanski2019}. In this situation, we
present the specific form of qutrit state saturating the lower bound
in the additive uncertainty relation
$\var_\rho(\bsA)+\var_\rho(\bsB)\geqslant\frac{15}{32}$, where such
inequality is saturated only when
$\rho(\epsilon)=\proj{\psi(\epsilon)}$, where
\begin{eqnarray}
\ket{\psi(\epsilon)}=\frac{\sqrt{14}}8\mathrm{i}\ket{0}+\epsilon\frac34\mathrm{i}\ket{1}+\frac{\sqrt{14}}8\ket{2},\quad\epsilon\in\set{\pm1}.
\end{eqnarray}
We can rewrite it as the generalized Bloch representation:
\begin{eqnarray}
\rho(\bsr) =
\frac13(\I+\sqrt{3}\bsr\cdot\bG)=\frac1{32}\Pa{\begin{array}{ccc}
                                                              7 & 3\sqrt{14}\epsilon & 7\mathrm{i} \\
                                                              3\sqrt{14}\epsilon & 18 & 3\sqrt{14}\mathrm{i}\epsilon \\
                                                              -7\mathrm{i} & -3\sqrt{14}\mathrm{i}\epsilon &
                                                              7
                                                            \end{array}
},
\end{eqnarray}
where
\begin{eqnarray}
\bsr=\Pa{\frac{6\sqrt{42}}{64}\epsilon,0,-\frac{11\sqrt{3}}{64},0,-\frac{14\sqrt{3}}{64},0,-\frac{6\sqrt{42}}{64}\epsilon,\frac{11}{64}}.
\end{eqnarray}
This example can be further generalized to the case where
$$
\bsA_t=\Pa{\begin{array}{ccc}
                                      -1 & 0 & t \\
                                      0 & 0 & 0 \\
                                      t & 0 & 1
                                    \end{array}
}\quad\text{and}\quad\bsB=\Pa{\begin{array}{ccc}
                   0 & 1 & 0 \\
                   1 & 0 & \mathrm{i} \\
                   0 & -\mathrm{i} & 0
                 \end{array}
}\quad (\forall t\in\real),
$$
then we have
$\min_{\rho\in\density{\bbC^3}}\Br{\var_\rho(\bsA_t)+\var_\rho(\bsB)}
=h(t)$, where
\begin{eqnarray}\label{eq:ht}
h(t):=
\begin{cases}
\frac{(15-t^2)(1+t^2)}{32},&\text{if }\abs{t}\leqslant1,\\
\frac{3+4t^2}{4(1+t^2)},&\text{if } \abs{t} \geqslant1.
\end{cases}
\end{eqnarray}
\end{exam}

\begin{exam}
For given two qutrit observables
$$
\bsA=\Pa{
\begin{array}{ccc}
 1 & 0 & 1 \\
 0 & -1 & -\mathrm{i} \\
 1 & \mathrm{i} & 0 \\
\end{array}
}\quad\text{and}\quad\bsB=\Pa{
\begin{array}{ccc}
 0 & 1 & -\mathrm{i} \\
 1 & 0 & 1 \\
 \mathrm{i} & 1 & 0 \\
\end{array}
},
$$
by using the method in \cite{Szymanski2019}, we have that
$$
\min_{\rho\in\rD(\bbC^3)}[\var_\rho(\bsA)+\var_\rho(\bsB)]=\min_{x\in\real}f(x),
$$
where
\begin{eqnarray*}
f(x)&:=& - \frac23(40x^2-52x+31)^{\frac12}\sin\Br{\frac\pi6 +
\frac13\arccos\Pa{-\frac{4x^3-186x^2+192x-23}{(40x^2-52x+31)^{\frac32}}}}\\
&&+\frac13(3x^2-2x+13).
\end{eqnarray*}
In fact, $\min_{x\in\real}f(x)\doteq0.427938$, which is attained at
$x\doteq-1.34253$. Note that the extremal qutrit state is
approximately given by
\begin{eqnarray*}
\rho = \Pa{
\begin{array}{ccc}
 0.14742387711339922 & 0.28637384163565377 \mathrm{i} & -0.20899784782884737 \\
 -0.28637384163565377\mathrm{i} & 0.5562869379027389 & 0.4059825161855749\mathrm{i} \\
 -0.20899784782884737 & -0.4059825161855749\mathrm{i} & 0.2962891849838617 \\
\end{array}
}.
\end{eqnarray*}
The above computation can be also checked by our algorithm presented
in Appendix~\ref{app:algorithm}.
\end{exam}

\section{Discussions}\label{sect:6}

%

In section~\ref{sect:3}, we established the equivalence between the
state-independent uncertainty relation expressed as the sum of
variances and the quadratic programming problem with nonlinear
constraints. Leveraging this equivalence, we conducted detailed
computations involving specific observables in sections~\ref{sect:4}
and~\ref{sect:5}. Using qubit and qutrit systems as illustrative
examples, we derived tight lower bounds for the sum of variances of
observables.

Having explored the theoretical underpinnings of our uncertainty
relation, we now shift our focus to its practical implications. In
particular, we consider the potential application of the additive
uncertainty relation to entanglement detection. Previously, the use
of variance for entanglement detection has been proposed in
\cite{Guhne2004,Hofmann2003,Toth2004}. Drawing on our main results,
we can provide a similar method to detect entanglement, which offers
a novel approach to tackle this important problem in quantum
information theory. Specifically, let us suppose $\bsA_i$ are any
observables acting on $\complex^m$ and $\bsB_i$ are any observables
acting on $\complex^n$. First we consider the observables
$\bsM_i=\bsA_i\ot\I+\I\ot\bsB_i(i=1,2)$. For any \emph{separable}
bipartite state $\rho$ in $\density{\complex^m\ot\complex^n}$, we
have
\begin{eqnarray*}
\var_\rho(\bsM_1)+\var_\rho(\bsM_2)&\geqslant&
\min_\sigma\Br{\var_\sigma(\bsA_1)+\var_\sigma(\bsA_2)}+\min_\tau\Br{\var_\tau(\bsB_1)+\var_\tau(\bsB_2)}\\
&=&m(\bsA_1,\bsA_2)+m(\bsB_1,\bsB_2).
\end{eqnarray*}
So if some bipartite state $\varrho$ is such that
\begin{eqnarray}\label{eq:test1}
\var_\varrho(\bsM_1)+\var_\varrho(\bsM_2)<m(\bsA_1,\bsA_2)+m(\bsB_1,\bsB_2),
\end{eqnarray}
then $\varrho$ must be \emph{entangled}. The additive uncertainty
relation gives rise to an entanglement criterion \cite{Zhang2021} .
Second, for any bipartite state
$\omega\in\density{\complex^m\ot\complex^n}$, if it satisfies that
\begin{eqnarray}\label{eq:test2}
\var_{\omega^\Gamma}(\bsA)+\var_{\omega^\Gamma}(\bsB)<\min_{\rho\in\density{\complex^m\ot\complex^n}}\Br{\var_\rho(\bsA)+\var_\rho(\bsB)}=m(\bsA,\bsB)
\end{eqnarray}
for both bipartite observables $\bsA$ and $\bsB$, where
$\omega^\Gamma$ stands for the partial-transpose with respect to
either one subsystem, then $\omega$ must not be positive
partial-transposed state, i.e., it must be entangled state by PPT
criterion \cite{Peres1996}. This provides another way to detect
entanglement.

For further applications of our approach to entanglement detection,
we need to generate random quantum states to test the criteria
Eq.~\eqref{eq:test1} and Eq.~\eqref{eq:test2}. The theoretical
analysis becomes increasingly challenging due to the exponential
growth of parameters in quantum states with the expansion of the
underlying space's dimension, coupled with the absence of a
closed-form solution for the quadratic programming problem involving
nonlinear constraints in Remark~\ref{rem:3.3}. In our future
research, we intend to report the numerical results pertaining to
this aspect.

\section{Concluding remarks}\label{sect:7}

We conclude this paper by summarizing our findings and highlighting
future research directions:

Firstly, we have thoroughly investigated the additive uncertainty
relation of any two or more observables in qudit systems,
specifically focusing on variance using the generalized Gell-Mann
representation. To establish a tight state-independent lower bound,
we have constructed a universal optimization model aimed at
minimizing the sum of matrix variances. This model is firmly
grounded in constrained quadratic programming and the constrained
numerical range of real symmetric matrices, derived from the
generators of $\SU(n)$ and generalized Bloch vectors of observables.

Secondly, our exploration extended to lower-dimensional qubit and
qutrit systems. In the qubit realm, we derived an analytical lower
bound for the additive uncertainty relation applicable to any two or
more observables. For qutrit systems, we developed a general
algorithm to compute the analytical lower bound for pairs of
observables, complementing this with illustrative examples and
numerical computations.

Thirdly, we discussed the potential application of the additive
uncertainty relation in entanglement detection, emphasizing its
experimental feasibility. We anticipate that our results will
facilitate experimental implementations in this context.

Looking ahead, we identify two key problems for future research:
\begin{itemize}
\item Characterizing both endpoints of the constrained numerical range of
$\bsT$ mentioned in Remark~\ref{rem:3.3}. Specifically, determining
under what conditions these endpoints correspond to the minimal and
maximal eigenvalues, $\lambda_{\min}(\bsT)$ and
$\lambda_{\max}(\bsT)$, respectively, of $\bsT$. This problem
carries significant implications for matrix optimizations and could
pave the way for the development of more efficient quantum
algorithms.
\item For any bipartite entangled state $\varrho$, it remains an intriguing
challenge to find two observables $\bsA$ and $\bsB$ whose sum of
variances, $\var_{\varrho}(\bsA)+\var_{\varrho}(\bsB)$, can
nonlinearly witness entanglement in $\varrho$ in the manner
described previously. Addressing this challenge could lead to new
insights and techniques for entanglement detection in quantum
systems.
\end{itemize}

\subsubsection*{Acknowledgements}

This research was supported by Zhejiang Provincial Natural Science
Foundation of China under Grant No. LZ23A010005, and by the National
Natural Science Foundation of China (NSFC) under Grants No. 11971140
and No. 12171044. The first author would like to express his
gratitude to Prof. Jing-Ling Chen for his insightful comments on
this manuscript, and also to Prof. Hua Xiang for his improved
algorithm outlined in Appendix~\ref{app:algorithm}.


\newpage
\appendix
\appendixpage
\addappheadtotoc

\section{Specific forms of all $\bsD_k$'s in qutrit system}\label{app:3Dk}

In the following, each $\bsD_k(k=1,\ldots,8)$ mentioned in
Proposition~\ref{prop:starprod8} is an $8\times 8$ real symmetric
matrix that is given immediately.{\small
\begin{eqnarray*}
\bsD_1=\Pa{
\begin{array}{cccccccc}
 0 & 0 & 0 & 0 & 0 & 0 & 0 & 1 \\
 0 & 0 & 0 & 0 & 0 & 0 & 0 & 0 \\
 0 & 0 & 0 & 0 & 0 & 0 & 0 & 0 \\
 0 & 0 & 0 & 0 & 0 & \frac{\sqrt{3}}{2} & 0 & 0 \\
 0 & 0 & 0 & 0 & 0 & 0 & \frac{\sqrt{3}}{2} & 0 \\
 0 & 0 & 0 & \frac{\sqrt{3}}{2} & 0 & 0 & 0 & 0 \\
 0 & 0 & 0 & 0 & \frac{\sqrt{3}}{2} & 0 & 0 & 0 \\
 1 & 0 & 0 & 0 & 0 & 0 & 0 & 0 \\
\end{array}
}, \bsD_2=\Pa{
\begin{array}{cccccccc}
 0 & 0 & 0 & 0 & 0 & 0 & 0 & 0 \\
 0 & 0 & 0 & 0 & 0 & 0 & 0 & 1 \\
 0 & 0 & 0 & 0 & 0 & 0 & 0 & 0 \\
 0 & 0 & 0 & 0 & 0 & 0 & -\frac{\sqrt{3}}{2} & 0 \\
 0 & 0 & 0 & 0 & 0 & \frac{\sqrt{3}}{2} & 0 & 0 \\
 0 & 0 & 0 & 0 & \frac{\sqrt{3}}{2} & 0 & 0 & 0 \\
 0 & 0 & 0 & -\frac{\sqrt{3}}{2} & 0 & 0 & 0 & 0 \\
 0 & 1 & 0 & 0 & 0 & 0 & 0 & 0 \\
\end{array}
}
\end{eqnarray*}
\begin{eqnarray*}
\bsD_3=\Pa{
\begin{array}{cccccccc}
 0 & 0 & 0 & 0 & 0 & 0 & 0 & 0 \\
 0 & 0 & 0 & 0 & 0 & 0 & 0 & 0 \\
 0 & 0 & 0 & 0 & 0 & 0 & 0 & 1 \\
 0 & 0 & 0 & \frac{\sqrt{3}}{2} & 0 & 0 & 0 & 0 \\
 0 & 0 & 0 & 0 & \frac{\sqrt{3}}{2} & 0 & 0 & 0 \\
 0 & 0 & 0 & 0 & 0 & -\frac{\sqrt{3}}{2} & 0 & 0 \\
 0 & 0 & 0 & 0 & 0 & 0 & -\frac{\sqrt{3}}{2} & 0 \\
 0 & 0 & 1 & 0 & 0 & 0 & 0 & 0 \\
\end{array}
}, \bsD_4=\Pa{
\begin{array}{cccccccc}
 0 & 0 & 0 & 0 & 0 & \frac{\sqrt{3}}{2} & 0 & 0 \\
 0 & 0 & 0 & 0 & 0 & 0 & -\frac{\sqrt{3}}{2} & 0 \\
 0 & 0 & 0 & \frac{\sqrt{3}}{2} & 0 & 0 & 0 & 0 \\
 0 & 0 & \frac{\sqrt{3}}{2} & 0 & 0 & 0 & 0 & -\frac{1}{2} \\
 0 & 0 & 0 & 0 & 0 & 0 & 0 & 0 \\
 \frac{\sqrt{3}}{2} & 0 & 0 & 0 & 0 & 0 & 0 & 0 \\
 0 & -\frac{\sqrt{3}}{2} & 0 & 0 & 0 & 0 & 0 & 0 \\
 0 & 0 & 0 & -\frac{1}{2} & 0 & 0 & 0 & 0 \\
\end{array}
}
\end{eqnarray*}

\begin{eqnarray*}
\bsD_5=\Pa{
\begin{array}{cccccccc}
 0 & 0 & 0 & 0 & 0 & 0 & \frac{\sqrt{3}}{2} & 0 \\
 0 & 0 & 0 & 0 & 0 & \frac{\sqrt{3}}{2} & 0 & 0 \\
 0 & 0 & 0 & 0 & \frac{\sqrt{3}}{2} & 0 & 0 & 0 \\
 0 & 0 & 0 & 0 & 0 & 0 & 0 & 0 \\
 0 & 0 & \frac{\sqrt{3}}{2} & 0 & 0 & 0 & 0 & -\frac{1}{2} \\
 0 & \frac{\sqrt{3}}{2} & 0 & 0 & 0 & 0 & 0 & 0 \\
 \frac{\sqrt{3}}{2} & 0 & 0 & 0 & 0 & 0 & 0 & 0 \\
 0 & 0 & 0 & 0 & -\frac{1}{2} & 0 & 0 & 0 \\
\end{array}
}, \bsD_6=\Pa{
\begin{array}{cccccccc}
 0 & 0 & 0 & \frac{\sqrt{3}}{2} & 0 & 0 & 0 & 0 \\
 0 & 0 & 0 & 0 & \frac{\sqrt{3}}{2} & 0 & 0 & 0 \\
 0 & 0 & 0 & 0 & 0 & -\frac{\sqrt{3}}{2} & 0 & 0 \\
 \frac{\sqrt{3}}{2} & 0 & 0 & 0 & 0 & 0 & 0 & 0 \\
 0 & \frac{\sqrt{3}}{2} & 0 & 0 & 0 & 0 & 0 & 0 \\
 0 & 0 & -\frac{\sqrt{3}}{2} & 0 & 0 & 0 & 0 & -\frac{1}{2} \\
 0 & 0 & 0 & 0 & 0 & 0 & 0 & 0 \\
 0 & 0 & 0 & 0 & 0 & -\frac{1}{2} & 0 & 0 \\
\end{array}
}
\end{eqnarray*}
\begin{eqnarray*}
\bsD_7=\Pa{
\begin{array}{cccccccc}
 0 & 0 & 0 & 0 & \frac{\sqrt{3}}{2} & 0 & 0 & 0 \\
 0 & 0 & 0 & -\frac{\sqrt{3}}{2} & 0 & 0 & 0 & 0 \\
 0 & 0 & 0 & 0 & 0 & 0 & -\frac{\sqrt{3}}{2} & 0 \\
 0 & -\frac{\sqrt{3}}{2} & 0 & 0 & 0 & 0 & 0 & 0 \\
 \frac{\sqrt{3}}{2} & 0 & 0 & 0 & 0 & 0 & 0 & 0 \\
 0 & 0 & 0 & 0 & 0 & 0 & 0 & 0 \\
 0 & 0 & -\frac{\sqrt{3}}{2} & 0 & 0 & 0 & 0 & -\frac{1}{2} \\
 0 & 0 & 0 & 0 & 0 & 0 & -\frac{1}{2} & 0 \\
\end{array}
},\bsD_8=\Pa{\begin{array}{cccccccc}
      1 & 0 & 0 & 0 & 0 & 0 & 0 & 0 \\
      0 & 1 & 0 & 0 & 0 & 0 & 0 & 0 \\
      0 & 0 & 1 & 0 & 0 & 0 & 0 & 0 \\
      0 & 0 & 0 & -\frac{1}{2} & 0 & 0 & 0 & 0 \\
      0 & 0 & 0 & 0 & -\frac{1}{2} & 0 & 0 & 0 \\
      0 & 0 & 0 & 0 & 0 & -\frac{1}{2} & 0 & 0 \\
      0 & 0 & 0 & 0 & 0 & 0 & -\frac{1}{2} & 0 \\
      0 & 0 & 0 & 0 & 0 & 0 & 0 & -1
    \end{array}}.
\end{eqnarray*}}

\newpage

\section{Algorithm for calculating
$\ell_{\bsa,\bsb}$ in Theorem~\ref{th:main}}\label{app:algorithm}

\begin{algorithm}\caption{\label{algo:QP}
The minimum of $\Innerm{\bsr}{\bsT_{\bsa,\bsb}}{\bsr}$ where
$\bsr\in\Omega^{\mathrm{ext}}_3$} \KwIn{$\bsa$ and $\bsb$ in
$\real^8$} \KwOut{$\ell_{\bsa,\bsb}=\min_{\bsr\in
 \Omega^{\text{ext}}_3}\Innerm{\bsr}{\bsT_{\bsa,\bsb}}{\bsr}$ and $\bsr_{\min}= \arg\ell_{\bsa,\bsb}$}
\BlankLine\BlankLine
\begin{algorithmic}
   \STATE{ 01. Form $\bsT_{\bsa,\bsb}$ by using the matrices $\bsD_k(k=1,2,\cdots, 8)$, given in Appendix~\ref{app:3Dk}, and  $O_{\bsa,\bsb}=\proj{\bsa}+\proj{\bsb}$. }
   \STATE{ 02. Set $\bsr = (0, 0, 0, 0, 0, 0, 0, -1 )$ and compute $f_{\min} =\Innerm{\bsr}{\bsT_{\bsa,\bsb}}{\bsr}$, $\bsr_{\min} = \bsr$. }
   \STATE{ 03. Repeatedly sample three normally distributed random numbers and perform the following three steps: }
   \STATE{ 04. \qquad  Assign these numbers to $r_1, r_2, r_3$ respectively, such that $\sum_{j=1}^3 r_j^2 = \frac{3}{4}$; }
   \STATE{ 05. \qquad  Set $\bsr = (r_1, r_2, r_3, 0,0,0,0, \frac{1}{2}) $ and compute $f(\bsr)=\Innerm{\bsr}{\bsT_{\bsa,\bsb}}{\bsr}$; }
   \STATE{ 06. \qquad  If $f(\bsr) < f_{\min}$, then    $f_{\min} = f(\bsr)$, $\bsr_{\min} = \bsr$. }
   \STATE{ 07. For $k= 1,2,\cdots, N$, Do }
   \STATE{ 08. \qquad  Set $R= \frac{\sqrt{3}}{2N} k$, where $\frac{\sqrt{3}}{2}$ is equally divided into $N$ portions.  }
   \STATE{ 09. \qquad  Repeatedly sample four normally distributed random numbers and perform the following eight steps:}
   \STATE{ 10. \qquad\qquad   Assign these numbers to $r_4, r_5, r_6, r_7$ respectively, such that $\sum_{j=4}^7 r_j^2 = R^2$; }
   \STATE{ 11. \qquad\qquad   $s =    \frac{\sqrt{3}+\epsilon\sqrt{3-4R^2}}{2R^2}$, where $\epsilon\in\set{\pm1}$; }
   \STATE{ 12. \qquad\qquad   $r_1 = s(r_4 r_6 + r_5 r_7)$;}
   \STATE{ 13. \qquad\qquad   $r_2 = s(r_5 r_6 - r_4 r_7)$;}
   \STATE{ 14. \qquad\qquad   $r_3 = \frac{1}{2}s (r^2_4 + r^2_5-r^2_6- r^2_7)$;}
   \STATE{ 15. \qquad\qquad   $r_8 = \frac{\sqrt{3}}{2}s R^2 -1  $;}
   \STATE{ 16. \qquad\qquad   Set $\bsr = (r_1, \cdots, r_8)$ and compute $f(\bsr)=\Innerm{\bsr}{\bsT_{\bsa,\bsb}}{\bsr}$; }
   \STATE{ 17. \qquad\qquad   If $f(\bsr) < f_{\min}$, then    $f_{\min} = f(\bsr)$, $\bsr_{\min} = \bsr$. }
   \STATE{ 18. Endfor }
   \STATE{ 19. $\ell_{\bsa,\bsb} = f_{\min} $. }
\end{algorithmic}
\end{algorithm}

\end{document}